\documentclass[twocolumn]{autart}    

\usepackage{amsmath}
\usepackage{algorithm,algpseudocode}
\usepackage{amssymb}
\usepackage{color}
\usepackage{enumerate}
\usepackage{wrapfig}
\usepackage{graphicx}          
\usepackage[round, sort]{natbib}
\usepackage{url}

\newtheorem{theorem}{Theorem}[section]
\newtheorem{corollary}{Corollary}[section]
\newtheorem{remark}{Remark}[section]
\newtheorem{definition}{Definition}[section]

\newtheorem{assumption}{Assumption}[section]
\newtheorem{problem}{Problem}

\usepackage{hyperref}
\usepackage{comment}
\hypersetup{
	colorlinks = true,
	citecolor = cyan,
}

\makeatletter
\def\ps@copyright{%
  \let\@mkboth\@gobbletwo
  \def\@oddhead{}%
  \let\@evenhead\@oddhead
  \def\@oddfoot{}%
  \let\@evenfoot\@oddfoot
}
\makeatother
	
\begin{document}
	\begin{frontmatter}
			
	\title{Robust Data-Driven Nash Equilibrium Seeking under Partial-Decision Information} 
			
	\thanks[footnoteinfo]{The work was supported in part by the National Natural Science Foundation of China under Grants U23B2059 and 62088101. \emph{(Corresponding author: Gang Wang.)}
            }
			
		\author[Bit]{Linqi Wang}\ead{wanglinqi@bit.edu.cn},
        \author[Ntu]{Yifei Li}\ead{li.yifei@ntu.edu.sg}, 
        \author[Ntu]{Wenjie Liu}\ead{wenjie.liu@ntu.edu.sg}, 
        \author[Bit]{Yuzhou Wei}\ead{weiyuzhou@bit.edu.cn},
		\author[Bit]{Gang Wang}\ead{gangwang@bit.edu.cn}, 
        \author[Ntu]{Lihua Xie}\ead{elhxie@ntu.edu.sg} 
		
		\address[Bit]{State Key Lab of Autonomous Intelligent Unmanned Systems, Beijing Institute of Technology, Beijing 100081, China}  
    		\address[Ntu]{Centre for Advanced Robotics Technology  Innovation (CARTIN), School of Electrical and Electronic Engineering,\\ Nanyang Technological University, Singapore} 

		\maketitle


    \allowdisplaybreaks
	
	\begin{abstract}

    This paper presents a data-driven framework for decentralized Nash equilibrium (NE) seeking in multi-agent systems with unknown linear dynamics subject to exogenous disturbances, operating under partial-decision information (where agents lack direct access to the decisions of all others) and equality constraints. The proposed framework integrates an NE model, a distributed communication protocol, an internal model for disturbance rejection, and a data-driven stabilization strategy. By reformulating the problem as a cooperative output regulation problem, we synthesize controllers directly from noisy input-state data via semi-definite programs (SDPs), providing formal guarantees for closed-loop stability and asymptotic convergence to the NE. The approach is further extended to a class of nonlinear systems with constant disturbances by leveraging integral control and describing nonlinearities via quadratic constraints. Numerical simulations involving unmanned aerial vehicle  networks and a rotary-wing aerial vehicle formation validate the efficacy and robustness of the proposed method.

 \end{abstract}
	\begin{keyword} Game theory, Data-driven control, output regulation, disturbance rejection, internal model principle, partial-decision information
	\end{keyword}
		\end{frontmatter}	
\allowdisplaybreaks

\section{Introduction}\label{sec:intro}

Game theory provides a principled framework for analyzing strategic interactions in multi-agent dynamical systems, with the Nash equilibrium (NE) serving as a central solution concept under which no agent can unilaterally reduce its own cost \citep{bacsar1998dynamic}. While early studies mainly focused on static games \citep{frihauf2011nash}, many contemporary engineering applications, such as robotic coordination in sensor networks \citep{stankovic2011distributed}, distributed resource allocation \citep{Zhou2024Distributed,WANG2024111356}, and optimal power flow in smart microgrids \citep{scarabaggio2024local}, involve agents with complex dynamics, couplings, and operational constraints. These applications call for NE seeking mechanisms that are not only distributed, but also amenable to feedback implementation and robust to uncertainty and disturbances.

Considerable progress has been made on NE seeking for dynamical agents, including single- and double-integrator systems \citep{YE2025112074,WangRobus2022}, as well as more general linear \citep{he2025stochastic,LIU2026112603} and nonlinear systems \citep{xu2024prescribed}. However, much of the existing literature assumes that each agent has full access to the decisions of the other agents and does not explicitly account for exogenous disturbances. This limits its applicability in practice, since the decisions of other agents are often unavailable in networked environments, while rejecting external disturbances is an integral part of closed-loop design. These limitations naturally motivate the study of NE seeking under incomplete information and disturbance-affected dynamics.

A further challenge arises from \emph{partial-decision information}, where each agent has no direct access to the decisions of all other agents and must instead rely on local communication \citep{salehisadaghiani2018distributed,qian2021distributed,MENG2023110919}. This setting has been studied using distributed optimization and consensus-based techniques \citep{tatarenko2020geometric,huang2024distributed}. In particular, \citet{guo2021linear} reformulated NE seeking for linear systems as a distributed output regulation problem, and \citet{romano2025game} further integrated an NE model, a communication protocol, and a stabilizing controller into a unified architecture. However, these approaches are mainly developed for linear time-invariant agents.

The difficulty becomes even more pronounced for nonlinear multi-agent systems. For example, \citet{SASSANO2022110389} characterized open-loop NEs for nonlinear differential games, while \citet{zhang2019distributed} and \citet{huang2024distributed} studied distributed schemes for nonlinear agents with uncertainties. Compared with the linear case, nonlinear dynamics complicate both the representation of the closed-loop system and the derivation of tractable stability conditions. As a result, existing approaches typically rely on substantial prior structural knowledge of the nonlinearities, such as known nominal models or prescribed basis-function parameterizations. This strong model dependence significantly restricts their applicability when the system dynamics are unknown.

In this context, data-driven control offers an attractive alternative when model identification is costly or unreliable. Building on Willems' fundamental lemma \citep{willems2005note}, data-driven methods enable controller synthesis directly from measured data without an explicit identification step \citep{liu2022data,wang2026data,liu2023data}. 
This paradigm has been extended to a broad range of control problems, including distributed control \citep{11533487},  model predictive control \citep{WEI2026113010}, aperiodic control \citep{yuan2026data}, and event-triggered control \citep{zhang2026data}.
Recent studies have also applied data-driven techniques to dynamic games, ranging from data-driven feedback NE computation for linear quadratic games \citep{xu2025data} to output regulation-based NE seeking from data \citep{wang2026robust}. 
 Despite these advances, several fundamental challenges remain: (i) extending output regulation-based NE seeking to partial-decision information settings with equality constraints; (ii) synthesizing decentralized controllers for unknown linear agents directly from noisy data; and (iii) extending direct data-driven NE seeking to nonlinear agents while retaining tractable stability and convergence guarantees.

To address these challenges, this paper develops a data-driven NE seeking framework for multi-agent systems with unknown dynamics, partial-decision
information, equality constraints, and disturbances. Specifically, we recast the constrained NE seeking problem as a cooperative output regulation problem by integrating a KKT-based NE model for generating regulation errors, a communication protocol for recovering unavailable decision information, an internal model for disturbance rejection, and decentralized controllers for closed-loop stabilization into a unified architecture. For unknown linear agents subject to external disturbances generated by known exosystems, we formulate a robust data-driven semidefinite program (SDP) using noisy
input-state data, whose feasibility guarantees closed-loop stability and asymptotic convergence to the constrained NE. We further extend the framework
to a class of nonlinear agents with constant disturbances by combining integral control with quadratic constraints on the nonlinearities. The
effectiveness and applicability of the proposed approach are demonstrated through numerical simulations involving unmanned aerial vehicle networks and rotary-wing aerial vehicle formation.

The main contributions are summarized as follows.
\begin{itemize}
  \item  A unified control framework is developed by integrating an NE model, a communication protocol, an internal model, and a data-driven controller, reformulating the NE seeking problem as a cooperative output regulation problem.

  \item For unknown linear systems subject to exogenous disturbances generated by known exosystems, we propose a data-driven design method based on SDPs, which guarantees closed-loop stability and asymptotic convergence to the NE using  noisy input-state data.

  \item An extension to a class of nonlinear systems with constant disturbances is developed by leveraging integral control and quadratic constraints, which enables decentralized data-driven NE seeking with asymptotic convergence guarantees.
\end{itemize}

It is worth clarifying the relationship between this work and the most relevant literature \citep{wang2026robust,guo2021linear,romano2025game,huang2024distributed}. Although the analysis uses several standard tools, the present results are not obtained by a straightforward combination of existing methods.

\begin{itemize}
\item \textbf{Compared with our previous work \citep{wang2026robust},} which considers unconstrained games in a standard network game setting with aligned cost and communication structures, the present paper studies a more practically relevant setting with equality constraints and partial-decision information. This requires the controller design to explicitly incorporate KKT dynamics and distributed communication protocols into the augmented system, thereby changing the structure of the decentralized synthesis problem and the associated nonlinear analysis.
\item \textbf{Compared with the model-based approaches in \citep{guo2021linear,romano2025game,huang2024distributed},} the present paper develops a unified direct data-driven framework without relying on exact system matrices. In particular, unlike \citet{guo2021linear}, we consider partial-decision information and do not require the cost-coupling structure to coincide with the communication topology. Unlike \citet{romano2025game}, our linear framework accommodates external disturbances generated by dynamic exosystems and is further complemented by a nonlinear extension for constant disturbances. Unlike \citet{huang2024distributed}, our framework covers equality-constrained games under partial-decision information and a broad class of nonlinear systems admitting lifted representations and quadratic constraints.
\end{itemize}

\subsection{Notation}
Let $\mathbb{R}$  denote the set of real numbers. For a matrix $Y$, $Y \succ (\succeq) 0$ implies that $Y$ is positive (semi-)definite, and 
$Y \prec (\preceq) 0$ implies $Y$ is negative  (semi-)definite. The spectral norm of $Y$ is denoted by $\|Y\|$, and the Moore-Penrose pseudoinverse of $Y$ is denoted by $Y^\dagger$. The Euclidean norm of a vector $x \in \mathbb{R}^n$ is denoted by $\|x\|$.  Given vectors $x_{1},\dots,x_{n}$, the stacking operator is defined as $\operatorname{col}(x_{1},\dots,x_{n})=[\,x_{1}^{\top},\dots,x_{n}^{\top}\,]^{\top}$. The operator $\operatorname{blkdiag}\{\cdot\}$ denotes a block diagonal matrix. The symbol $I_{n}$ denotes the $n$-dimensional identity matrix. We abbreviate the symmetric matrix $\begin{bmatrix}
\begin{smallmatrix}
A&B\\
B^{\top}&C
\end{smallmatrix}\end{bmatrix}$ as $\begin{bmatrix}\begin{smallmatrix}
A&B\\
\star &C
\end{smallmatrix}\end{bmatrix}$. 
For a finite set $\mathcal{X}$, its cardinality is denoted by 
$|\mathcal{X}|$. 

\section{Problem Formulation}\label{sec_pre}

This section formalizes the NE seeking problem for a class of multi-agent systems with unknown linear dynamics and gives the definitions and preliminaries that form the basis for subsequent analysis.

Consider a game with $N$ agents indexed by the set $\mathcal{I}=\{1,\ldots,N\}$. The dynamics of agent $i\in\mathcal{I}$ are given by
\begin{equation}\label{linsys}
\begin{aligned}
\dot{x}_i &= A_ix_i + B_iu_i + D_iw_i, \\
y_i  &= C_ix_i,
\end{aligned}
\end{equation}
where $x_i \in \mathbb{R}^{n_{i}}$ is the state, $u_i \in \mathbb{R}^{m_{i}}$ the control input (strategy vector), $y_i \in \mathbb{R}^{p_{i}}$ the output (decision variable), and $w_i \in \mathbb{R}^{q_{i}}$ the disturbance. 
The system matrices $A_i \in \mathbb{R}^{n_i \times n_i}$, $B_i \in \mathbb{R}^{n_i \times m_i}$, $C_i \in \mathbb{R}^{p_i \times n_i}$, and $D_i \in \mathbb{R}^{n_i \times q_i}$ are unknown.
The disturbance $w_i$ is generated by an exosystem
\begin{equation}\label{disturbance}
\dot{w}_i = E_{i}w_i,
\end{equation}
where $E_i \in \mathbb{R}^{q_i \times q_i}$ satisfies the following assumption.
\begin{assumption}\label{no_eig}
For each $i\in \mathcal{I}$, $E_{i}$ is known and has no eigenvalues
with negative real parts.
\end{assumption}


We consider a partial-decision information setting, in which agent $i\in\mathcal I$ does not have direct access to all other agents' decisions. Accordingly, each agent maintains local estimates $\hat y_j^i$ of $y_j$ for all $j\in\mathcal I$, with $\hat y_i^i:=y_i$.
The communication network is modeled by an undirected graph $G=(\mathcal I,\mathcal E,\mathcal A)$, where $\mathcal E\subseteq\mathcal I\times\mathcal I$ is the edge set and $\mathcal A=[a_{ij}]\in\mathbb R^{N\times N}$ is the adjacency matrix. The adjacency weights satisfy $a_{ii}=0$, $a_{ij}=a_{ji}=1$ if $(i,j)\in\mathcal E$, and $a_{ij}=0$ otherwise. The neighbor set of agent $i$ is $\mathcal N_i:=\{j\in\mathcal I:(i,j)\in\mathcal E\}.$ To ensure effective information transmission, we impose the following assumption.

\begin{assumption}\label{ass_graph}
The communication graph $G$ is unweighted, undirected, and connected.
\end{assumption}

Building on these settings, each agent $i$ seeks to minimize a local quadratic cost function 
\begin{align}\label{cost_func}
J_i(y_i,y_{-i})
&= y_i^\top Q_{ii}y_i + F_{ii}y_i + \rho_i \nonumber\\
&\quad + \sum_{j \in \mathcal{I} \setminus \{i\}}\bigl(y_i^\top Q_{ij}y_j + y_j^\top F_{ij}y_j\bigr),
\end{align}
where $y_{-i}={\rm col}(y_1,\ldots,y_{i-1},y_{i+1},\ldots,y_N)$, $Q_{ii}\in\mathbb R^{p_i\times p_i}$ with $Q_{ii}\succ0$, $F_{ii}\in\mathbb R^{1\times p_i}$, $\rho_i\in\mathbb R$, $Q_{ij}\in\mathbb R^{p_i\times p_j}$, and $F_{ij}\in\mathbb R^{p_j\times p_j}$ for $i\neq j$.
In general, the cost couplings are not restricted to the communication graph. Therefore, under partial-decision information, agent $i$ may need estimates of non-neighboring decisions to evaluate its local gradient.
\begin{remark}
The cost function in \eqref{cost_func} is more general than the standard network game cost, since its coupling terms are not restricted to the communication graph. Indeed, the summation in  \eqref{cost_func} is over all  $j \in \mathcal{I} \setminus \{i\}$ and nonzero $Q_{ij}$ and $F_{ij}$ may appear even for $j \notin \mathcal{N}_i$. Hence, under partial-decision information, agent $i$ may need estimates of non-neighboring decisions for local gradient evaluation. The standard network game is recovered as a special case. In particular,
\begin{align}\label{cost_exp_2}
J_i=\|y_i-r_i\|^2+\sum_{j\in\mathcal N_i}\|y_i-y_j\|^2
\end{align}
is recovered from \eqref{cost_func} by choosing
$Q_{ii}=(1+|\mathcal N_i|)I_{p_i}$, $F_{ii}=-2r_i^\top$, $\rho_i=r_i^\top r_i$,
$Q_{ij}=-2I_{p_i}$ and $F_{ij}=I_{p_j}$ for $j\in\mathcal N_i$, and
$Q_{ij}=0$, $F_{ij}=0$ otherwise.
\end{remark}

In addition, for engineering considerations, we impose the following linear constraint
\begin{equation}\label{equa_cons}
R_i y_i = h_i,
\end{equation}
where  $R_{i}\in\mathbb{R}^{f_{i} \times p_{i}}$ and $h_{i}\in\mathbb{R}^{f_{i}}$. The feasible set for agent $i$ is given by $\Lambda_i  = \{ y_i \in \mathbb{R}^{p_i} \mid  R_i y_i = h_i \}$. The resulting optimization problem for agent $i$ can then be written as 
\begin{subequations}\label{min_cost}
\begin{align}
\min_{y_i} ~&~J_{i}(y_{i},y_{-i}), \label{min_cost_a}\\
\text{s.t.}~&~   R_i y_i = h_i.\label{min_cost_c}
\end{align}
\end{subequations}

\begin{remark}
The optimization problem in \eqref{min_cost} extends the formulation in \cite{guo2021linear} by explicitly incorporating the linear constraint \eqref{equa_cons}, which reflects engineering requirements. Such equality constraints frequently arise in practice, e.g., power balance in optimal power flow \citep{10742555}, and formation centering in multi-robot coordination \citep{chen2020sign}. 
This inclusion broadens the framework to scenarios where decision variables must satisfy strict balance conditions.
\end{remark}

The objective is to seek an NE of the game \eqref{min_cost}, whose formal definition is given below.

\begin{definition}
The output $y^*:={\rm col}(y_{1}^{*},\ldots,y_{N}^{*})$ is an NE of \eqref{min_cost} if, for every agent $i\in\mathcal{I}$,
\begin{equation}
J_i(y^*_i, y^*_{-i}) \leq J_i(y_i, y^*_{-i}), ~\forall y_i \in \Lambda_i.
\end{equation}
\end{definition}
Since $Q_{ii} \succ 0$, each local cost function $J_i(y_i,y_{-i})$ is continuous, strictly convex, and radially unbounded in $y_i$ for any fixed $y_{-i}$. It then follows from \cite[Corollary 4.2]{bacsar1998dynamic} that the game \eqref{min_cost} admits an NE.

Define the partial gradient of $J_i(y_i,y_{-i})$ with respect to $y_i$ as
$\nabla_i J_i(y_i,y_{-i}) := \frac{\partial J_i(y_i,y_{-i})}{\partial y_i}
= \operatorname{col}\!\left(\frac{\partial J_i}{\partial y_{i1}},\ldots,\frac{\partial J_i}{\partial y_{ip_i}}\right)\in\mathbb{R}^{p_i}$. 
Let the pseudo-gradient be
$f(y)=\operatorname{col}(\nabla_1J_1(y_1,y_{-1})$, $\ldots,\nabla_NJ_N(y_N,y_{-N}))$,
where $y=\operatorname{col}(y_1,\ldots,y_N)$. 
The pseudo-gradient admits the compact representation 
\begin{equation}\label{barQ}
f(y) = \bar{Q}y + \bar{F},
\end{equation} 
where $\bar F=\operatorname{col}(F_{11}^\top,\ldots,F_{NN}^\top)
$ and 
\begin{equation*}
\bar{Q} =
\begin{bmatrix}
Q_{11} + Q_{11}^\top & Q_{12} & \cdots & Q_{1N} \\
Q_{21} & Q_{22} + Q_{22}^\top & \cdots & Q_{2N} \\
\vdots & \vdots & \ddots & \vdots \\
Q_{N1} & Q_{N2} & \cdots & Q_{NN} + Q_{NN}^\top
\end{bmatrix}.
\end{equation*}  
For the equality-constrained game \eqref{min_cost}, by the KKT conditions for each agent's convex optimization problem \cite[Section 5.5.3]{boyd2004convex}, $y^*$ is a constrained NE if and only if there exists a multiplier vector
$\varsigma^*=\operatorname{col}(\varsigma_1^*,\ldots,\varsigma_N^*)$, with $\varsigma_i^*\in\mathbb{R}^{f_i}$, such that
\begin{equation}\label{equ_Q}
\bar{Q}y^* + \bar{F} + R^\top \varsigma^* = 0, \quad Ry^* = h,
\end{equation}
where $R = \mathrm{blkdiag}(R_1,\dots,R_N)$ and $h = \mathrm{col}(h_1,\dots,h_N)$.

\begin{assumption}\label{ass_pseudo}
The feasible set $\Lambda = \prod_{i\in\mathcal{I}} \Lambda_i$ is nonempty, and the KKT system
\begin{align}\label{barq}
    \bar{Q}y + \bar{F} + R^\top \varsigma = 0, \quad Ry = h
\end{align}
admits a unique solution $(y^*, \varsigma^*)$.
\end{assumption}

\begin{remark}\label{rem_kkt_uni}
Assumption \ref{ass_pseudo} guarantees that the KKT system  \eqref{barq} has a unique solution and thus that the equality-constrained game admits a unique NE. All subsequent convergence results rely on this assumption. If Assumption \ref{ass_pseudo} is relaxed, the game may admit multiple NEs, and the present framework does not provide an a priori criterion for equilibrium selection. A rigorous treatment of the multiple-equilibrium case would require additional tools, such as explicit equilibrium-selection mechanisms or refined equilibrium concepts; see, e.g., \cite{benenati2023optimal,kulkarni2012variational}. These extensions are left for future work.
\end{remark}

\begin{remark}\label{KKT-matrix}
Assumption \ref{ass_pseudo} holds if and only if the KKT matrix
$K:=\begin{bmatrix}
\begin{smallmatrix}
\bar Q & R^\top\\
R & 0
\end{smallmatrix}\end{bmatrix}$ 
is nonsingular. If $\bar Q$ is nonsingular, then by the Schur complement, $K$ is nonsingular if and only if $R\bar Q^{-1}R^\top$ is nonsingular. 
Therefore, a convenient sufficient condition is that $\bar Q\succ 0$ and $R$ has full row rank, which imply
$R\bar Q^{-1}R^\top \succ 0$. Moreover, a  sufficient condition for $\bar Q$ to be nonsingular is block strict diagonal dominance, i.e.,
$\left\|(Q_{ii}+Q_{ii}^\top)^{-1}\right\|^{-1}
>
\sum_{j\in\mathcal I,\;j\neq i}\|Q_{ij}\|$, $\forall i\in\mathcal I$; see, e.g., \citep[Theorem~1]{feingold1962block}.
\end{remark}

Under Assumption \ref{ass_pseudo}, the equality-constrained game \eqref{min_cost} admits a unique NE $y^\ast$. The problem considered in this paper is to design decentralized controllers that steer the agent outputs to $y^\ast$ under unknown system dynamics, exogenous disturbances, equality constraints, and partial-decision information. This is formalized below.

\begin{problem}\label{pro_1}
Under Assumptions \ref{no_eig}--\ref{ass_pseudo}, consider the equality-constrained game \eqref{min_cost} for  $N$ agents governed by the unknown linear dynamics \eqref{linsys} and subject to disturbances $w_i$ generated by \eqref{disturbance}.
In the partial-decision information setting, the goal is to design a decentralized control strategy $u_{i}$ for each agent $i\in\mathcal{I}$ such that
(i) the closed-loop system admits an asymptotically stable equilibrium; and (ii) the output $y$ converges to the NE $y^*$.
\end{problem}

\section{The Control Framework}
To solve Problem \ref{pro_1}, this section generalizes the framework in \citet{romano2025game} by integrating four components: (i) an NE model for error approximation; (ii) a partial-decision information communication protocol for inter-agent information exchange; (iii) an internal model for disturbance rejection; and (iv) a decentralized data-driven controller for stabilizing unknown dynamics. 
Unlike \citet{romano2025game}, where integral control compensates only constant disturbances, the internal model rejects disturbances generated by exosystems. Moreover, to address the NE seeking problem without knowledge of the matrices $(A_{i}, B_{i}, C_{i}, D_{i})$, we introduce a novel data-driven controller. The design of each component is outlined below.

\subsection{The NE Model}
The constrained NE is characterized by the KKT system in \eqref{equ_Q}. Since the equilibrium output $y_i^\ast$ is not known a priori, the tracking error $y_i-y_i^\ast$ cannot be directly used for feedback design. Instead, each agent uses a dynamical system to generate a surrogate for this error, which relies on the computation of $\nabla_i J_i(y_i,y_{-i})$.
However, under partial-decision information, agent $i$ does not have direct access to all components of $y_{-i}$. Hence, each agent maintains local estimates of the decision profile. Specifically, define 
\begin{subequations}\label{eq:est_y} 
\begin{align}
    \hat{y}_i^i &:= y_i, \\\hat{y}_{-i}^i &:= \text{col}(\hat{y}_j^i)_{j \neq i}, \\\hat{y}^i &:= \text{col}(\hat{y}_j^i)_{j \in \mathcal{I}}. 
\end{align}
\end{subequations}
Leveraging these estimates and inspired by \citet{lawrence2020linear}, we introduce the following NE model
\begin{subequations}\label{eq:NEM} 
\begin{align}
\dot{\varsigma}_i &= R_i y_i - h_{i},\label{eq:NEMa}\\
e_i &= \nabla_i J_i(y_i, \hat{y}_{-i}^{i}) + R_i^\top \varsigma_i, \label{eq:NEMb}
\end{align}
\end{subequations}
where  $\varsigma_i\in\mathbb R^{f_i}$ is a
local multiplier state and $e_i\in\mathbb R^{p_i}$ is the local regulation error.
Therefore, the NE seeking problem can be cast into an output regulation problem, where driving $e_i$ to zero enforces stationarity, while driving $\dot{\varsigma}_i$ to zero enforces feasibility.  
Under Assumption \ref{ass_pseudo}, these conditions have a unique solution, and therefore the output is the constrained NE $y^\ast$.
\begin{remark}
Although equality constraints are standard in optimization, their incorporation is nontrivial in the present framework. First, they change the equilibrium characterization from a pure pseudo-gradient condition to a KKT system with multiplier dynamics. Second, under partial-decision information, the corresponding KKT conditions cannot be implemented directly from locally available information and therefore require communication protocols for their distributed realization. Third, the associated multiplier dynamics become part of the augmented closed-loop system used for offline data collection and decentralized data-driven synthesis, thereby substantially changing the structure of the subsequent controller design problem.
\end{remark}

\subsection{The Partial-Decision Information Communication Protocol}
Under partial-decision information, agent $i$ does not have direct access to all components of $y_{-i}$. To enable gradient evaluation using only locally available information, we employ a consensus-based communication protocol to generate local estimates of the decision profile.
Accordingly, the joint output is partitioned into three parts, namely the preceding part $p_{pi}:=\sum_{j<i}p_j$, the local part $p_i$, and the subsequent part $p_{si}:=\sum_{j>i}p_j$.
Then define the selection matrices
\begin{align*}
\mathbf{M}_{y_i}&:=
\begin{bmatrix}
0_{p_i \times p_{pi}} & I_{p_i} & 0_{p_i \times p_{si}}
\end{bmatrix},\\
\mathbf{H}_{y_i}&:=
\begin{bmatrix}
I_{p_{pi}} & 0_{p_{pi} \times p_i} & 0_{p_{pi} \times p_{si}} \\
0_{p_{si} \times p_{pi}} & 0_{p_{si} \times p_i} & I_{p_{si}}
\end{bmatrix},
\end{align*}
so that $y_i = \mathbf{M}_{y_i} \hat{y}^i$, $\hat{y}_{-i}^i = \mathbf{H}_{y_i} \hat{y}^i$, and $\hat{y}^i = \mathbf{M}_{y_i}^\top y_i + \mathbf{H}_{y_i}^\top \hat{y}_{-i}^{i} $.
Each agent updates its local estimate $\hat y_{-i}^i$ through the following consensus-based communication protocol over $G$
\begin{equation}
\dot{\hat{y}}_{-i}^i = \delta \sum_{j \in\mathcal{I}} a_{ij} (\mathbf{H}_{y_i} \hat{y}^j - \hat{y}_{-i}^i),\label{eq:cpy}
\end{equation}
where $\delta > 0$ is a connectivity parameter. 
Under the proposed protocol, the estimates asymptotically recover
$\hat{y}_{-i}^{i}\rightarrow y_{-i}$,
thereby enabling accurate gradient evaluation \cite[Definition 2]{romano2022game}.
\begin{remark}
Although the communication protocol in \eqref{eq:cpy} has a consensus-like form, it should not be interpreted as a standard consensus protocol. Standard consensus protocols act on the actual agent states or outputs and are designed to drive them to asymptotic agreement; see, e.g., \cite{olfati2007consensus,ren2005consensus}. By contrast, \eqref{eq:cpy} acts only on the estimate variables and is introduced to reconstruct the unavailable components of the decision profile required for local gradient evaluation under partial-decision information. Its role is therefore not to enforce agreement of the actual outputs $y_{i}$, but to support the distributed implementation of the NE model \eqref{eq:NEM}.
\end{remark}

\subsection{The Internal Model}
To reject the exogenous disturbances generated by \eqref{disturbance}, each agent $i \in \mathcal{I}$ is equipped with an internal model. 
            
\textbf{Definition 3.1} (Internal Model \cite{huang2004nonlinear}).
For the exosystem $\dot{w}_i = E_i w_i$, the matrix pair $(\Sigma_i, \hat{\Sigma}_i)$ forms a $p_i$-copy internal model of $E_i$ if
\begin{align*}
\Sigma_{i}=\underbrace{\operatorname{blkdiag}(\sigma_{i},\dots,\sigma_{i})}_{p_{i}\rm{-tuple}},~
\hat{\Sigma}_{i}=\underbrace{\operatorname{blkdiag}(\hat{\sigma}_{i},\dots,\hat{\sigma}_{i})}_{p_{i}\rm{-tuple}},
\end{align*}
where $\sigma_i$ is a constant matrix whose characteristic polynomial equals the minimal polynomial of $E_i$, and $\hat{\sigma}_i$ is a constant vector such that $(\sigma_i, \hat{\sigma}_i)$ is controllable.

Equipped with this matrix pair, each agent $i\in\mathcal{I}$ implements the internal model via
\begin{equation}\label{internal model}
\dot{\eta}_{i}=\Sigma_{i}\eta_{i}+\hat{\Sigma}_{i}e_{i},
\end{equation}
where $\eta_i$ is the internal model state driven by the local error $e_i$.

\begin{remark}\label{rem:exosystem}
Assuming the exosystem matrix $E_i$ to be known is standard in output regulation theory \citep{huang2004nonlinear}. In the present direct data-driven setting, this knowledge is required to construct the internal model in \eqref{internal model}. 
By contrast, observer-based disturbance-rejection methods typically require the plant matrices or an identified model, adaptive internal model approaches rely on online updates that are not directly compatible with offline convex synthesis \citep{BIN2019422}, and sliding-mode designs usually require a known disturbance bound \citep{loukianov2018robust}. Moreover, estimating $E_i$  from data is also challenging, since 
$w_{i}(t)$ is not directly measured and is coupled with unknown plant dynamics and measurement noise.
These requirements are not compatible with the present framework, where the controller is synthesized offline from finite noisy data with unknown system matrices. If $E_i$ is inaccurate or only partially known, unmodeled frequencies act as unmatched uncertainties, so that exact asymptotic regulation is generally replaced by approximate regulation. Extending the present framework to completely unknown exosystems would require an additional mechanism for inferring the disturbance generator from data, and is therefore left for future work.
\end{remark}

\subsection{The Stabilizing Controller}
The final component of the framework is a decentralized controller for the augmented system, which integrates the agent dynamics \eqref{linsys}, the NE model \eqref{eq:NEM}, the communication protocol \eqref{eq:cpy}, and the internal model  \eqref{internal model}. For each agent $i \in \mathcal{I}$, the control strategy is given by
\begin{align}\label{eq:ct}
u_i =K_{i}{\rm col}(x_i,\hat{y}_{-i}^i,\varsigma_i,\eta_i),
\end{align}
where $K_{i}=[K_{i}^{1} ~K_{i}^{2}~K_{i}^{3}~K_{i}^{4}]$ is the feedback gain matrix to be determined. 

These components together yield the closed-loop architecture shown in Fig. \ref{fig0}.
\begin{figure}[t]
  \centering
  \includegraphics[scale=0.45] {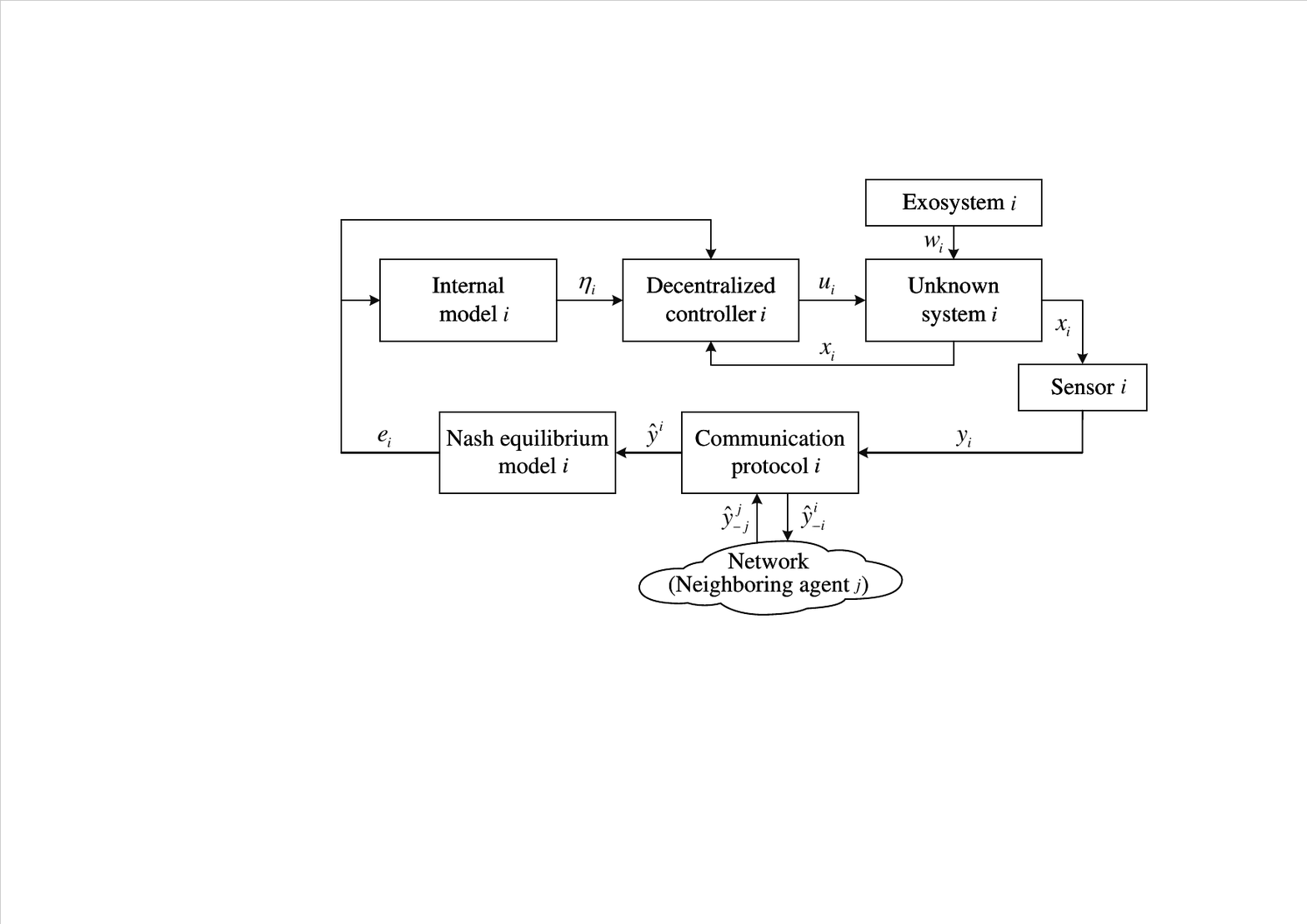}
\caption{Scheme of the closed-loop system.}\label{fig0}
  \centering
\end{figure}

\section{Decentralized Data-Driven NE Seeking for Linear Dynamics}\label{sec_lin}
The previous section established an output regulation-based framework for NE seeking and introduced the corresponding internal model-based controller. Building on this framework, the present section develops a decentralized data-driven synthesis method for the unknown linear dynamics.

To integrate the above framework into a unified augmented system, we further augment the disturbance generator with two constant components. For each $i\in\mathcal{I}$, define $\tilde{w}_{i}=\mathrm{col}(\tilde{w}_{1i}, \tilde{w}_{2i},\tilde{w}_{3i})\in \mathbb{R}^{g_{i}}$ with initial condition $\tilde{w}_{i}(0)=\mathrm{col}(w_{i}(0), 1, 1)$, which evolves according to
\begin{equation}\label{eyx_sys}
\dot{\tilde{w}}_i = \tilde{E}_i \tilde{w}_i = \mathrm{diag}(E_i, 0, 0)\tilde{w}_i.
\end{equation}
Note that the knowledge of \(E_i\) only specifies the disturbance generator; the actual disturbance trajectory \(\tilde w_i(t)\) remains unknown because it depends on the unknown initial disturbance state \(w_i(0)\).

To avoid relying on the unknown system matrices, we next perform an offline experiment to collect data for controller design. Since agent $i$ can access only local information during data collection, the inter-agent terms appearing in the communication protocol \eqref{eq:cpy} and the internal model dynamics \eqref{internal model} cannot be directly incorporated into the offline data-collection experiment. We therefore consider the following local dynamics obtained by removing these inter-agent couplings
\begin{subequations}\label{eq:argue_no}
\begin{align}
\dot{x}_i &= A_i x_i + B_i u_i +D_i w_i,\label{eq:argue_noa}\\
y_{i}&=C_{i}x_{i},\\
\dot{\breve{y}}_{-i}^i &= -\delta \sum_{j \in\mathcal{I}} a_{ij} \breve{y}_{-i}^i,\\
\dot{\varsigma}_i &= R_i y_{i} - h_i,\\
\dot{\breve{\eta}}_i&=\Sigma_{i}\breve{\eta}_i+\hat{\Sigma}_{i} \big((Q_{ii}+Q^{\top}_{ii}) y_{i} + F_{ii}^{\top} + R_i^\top \varsigma_i \big),
\end{align}
\end{subequations}
where $\breve{y}_{-i}^i$ denotes the local output estimate obtained by excluding the inter-agent terms in \eqref{eq:cpy}, and $\breve{\eta}_i$ denotes the internal model state obtained by excluding the inter-agent terms in \eqref{internal model}.
Defining the augmented local state as $\tilde{\xi}_i:= \mathrm{col}(x_i, \breve{y}_{-i}^{i}, \varsigma_i, \breve{\eta}_i)$$\in \mathbb{R}^{v_{i}}$, we obtain the augmented dynamics
\begin{equation}\label{arg_com}
\dot{\tilde{\xi}}_i = \hat{A}_i \tilde{\xi}_i + \hat{B}_i u_i + \hat{D}_i \tilde{w}_i,
\end{equation}
where
\begin{align*}
\hat{A}_i &=
\begin{bmatrix}
A_i & 0 & 0 & 0 \\
0 & -\delta \sum_{j\in\mathcal{I}} a_{ij} I & 0 & 0 \\
R_{i}C_{i} & 0 & 0 & 0 \\
\hat{\Sigma}_{i} (Q_{ii}+Q^{\top}_{ii})C_i & 0 & \hat{\Sigma}_{i} R_{i}^\top & \Sigma_{i}
\end{bmatrix},\\
\hat{B}_i &=
\begin{bmatrix}
B_i \\
0 \\
0 \\
0
\end{bmatrix}, \quad
\hat{D}_i =
\begin{bmatrix}
\tilde{D}_i \\
0 \\
 - \tilde{N}_{ii} \\
\hat{\Sigma}_{i} \tilde{F}_{ii} 
\end{bmatrix},
\end{align*}
with
$\tilde{D}_{i}=[D_i ~0_{n_{i}\times 2}]$, $\tilde{F}_{ii}=[0_{p_{i}\times q_{i}}~F_{ii}^{\top}~0_{p_{i}\times 1}]$, and $\tilde{N}_{ii}=[0_{f_{i}\times (q_{i}+1)}~h_{i}]$.

An input sequence $\{u_{i}(l)\}_{l=0}^{T-1}$ is applied to systems \eqref{eq:argue_no} to generate the trajectories $\{x_{i}(l)\}_{l=0}^{T-1}$, $\{\breve{y}_{-i}^i(l)\}_{l=0}^{T-1}$, $\{\varsigma_{i}(l)\}_{l=0}^{T-1}$, $\{\breve{\eta}_{i}(l)\}_{l=0}^{T-1}$, and $\{y_{i}(l)\}_{l=0}^{T-1}$. 
Over each interval $l\in [t_{l},t_{l+1})$, the state derivatives are approximated using forward differences, satisfying
\begin{subequations}\label{eq:coll_data}
\begin{align}
\dot{x}_i (l) &= A_i x_i (l) + B_i u_i (l) +D_i w_i (l)+\pi_{1i} (l),\label{eq:coll_data_a}\\
\dot{\breve{y}}_{-i}^i (l) &= -\delta \sum_{j \in \mathcal{I}} a_{ij}  \breve{y}_{-i}^i (l)+\pi_{2i} (l),\label{eq:coll_data_b}\\
\dot{\varsigma}_i (l) &= R_i y_i (l) - h_{i}+\pi_{3i} (l),\label{eq:coll_data_c}\\
\dot{\breve{\eta}}_i (l)&=\Sigma_{i}\breve{\eta}_i (l)+\hat{\Sigma}_{i} \Big[(Q_{ii}+Q^{\top}_{ii}) y_i (l) + F_{ii}^{\top} \nonumber\\
&\quad+ R_i^\top \varsigma_i (l)\Big]+\pi_{4i} (l)\label{eq:coll_data_d}.
\end{align}
\end{subequations}
Define $\pi_{i} (l)={\rm col}(\pi_{1i} (l),\pi_{2i} (l),\pi_{3i} (l),\pi_{4i} (l))$, where $\pi_{i}$ is the aggregated measurement noise vector arising during offline data collection.

For subsequent data-driven analysis, the collected trajectories are systematically organized into matrix form as follows
\begin{subequations}\label{Da}
\begin{align}
U_{i}^{-} &= \begin{bmatrix} u_{i}(0) & u_{i}(1) & \cdots & u_{i}(T-1) \end{bmatrix}, \\
Y_{i}^{-} &= \begin{bmatrix} y_{i}(0) & y_{i}(1) & \cdots & y_{i}(T-1) \end{bmatrix},\\
\Xi_{i}^{-} &= \begin{bmatrix} \tilde{\xi}_{i}(0) & \tilde{\xi}_{i}(1) & \cdots & \tilde{\xi}_{i}(T-1) \end{bmatrix}, \\
\Xi_{i}^{+} &= \begin{bmatrix} \dot{\tilde{\xi}}_{i}(0) & \dot{\tilde{\xi}}_{i}(1) & \cdots & \dot{\tilde{\xi}}_{i}(T-1) \end{bmatrix}.
\end{align}
\end{subequations}
Similarly, stack the unknown disturbance $\tilde{w}_{i}$ and noise $\pi_{i}$ across the $T$ time slots into the matrices
\begin{subequations}\label{Da_n}
\begin{align}
W_{i}^{-} &= \begin{bmatrix} \tilde{w}_{i}(0) & \tilde{w}_{i}(1) & \cdots & \tilde{w}_{i}(T-1) \end{bmatrix},\label{Da_n1}\\
\Pi_{i}^{-} &= \begin{bmatrix} \pi_{i}(0) & \pi_{i}(1) & \cdots & \pi_{i}(T-1) \end{bmatrix}.\label{Da_n2}
\end{align}
\end{subequations}
According to \eqref{eq:coll_data}, these data matrices satisfy the following linear relationship
\begin{equation}\label{sysX1}
\Xi_{i}^{+} = \hat{A}_i\Xi_{i}^{-} + \hat{B}_i U_{i}^{-} +\hat{D}_i W_{i}^{-}+ \Pi_{i}^{-}.
\end{equation}
To ensure that the collected data are sufficiently informative for controller synthesis, we impose the following rank condition.
\begin{assumption}\label{rank}
The matrix $\begin{bmatrix}
\begin{smallmatrix}
U_{i}^{-}\\
\Xi_{i}^{-}
\end{smallmatrix}
\end{bmatrix}$ has full row rank.
\end{assumption}

\begin{remark}\label{rankre}
Assumption~\ref{rank} is a standard informativity condition in data-driven control. In practice, this condition can be enforced by applying an input sequence that is persistently exciting of order $v_i+1$, and by choosing the data length \(T\) such that $T \ge (m_i+1)(v_i+1)-1$, where \(m_i\) is the input dimension and \(v_i\) is the dimension of the augmented local state \(\tilde{\xi}_i\). This lower bound follows from the standard length requirement for a persistently exciting input; see, e.g., \cite{de2019formulas}.
\end{remark}

To account for measurement noise $\Pi_{i}^{-}$ in the collected data, we impose the following assumption.
\begin{assumption}\label{ass_datanoise}
For each agent $i\in\mathcal{I}$, the noise matrix $\Pi_{i}^{-}$ is bounded; that is, there exists a matrix $\Delta_{i}$ such that $\Pi_{i}^{-}\in \mathbf{\Pi}_{i}$, where $\mathbf{\Pi}_{i}:=\{\Pi_{i}\in \mathbb{R}^{v_{i}\times T}:
\Pi_{i}\Pi_{i}^\top \preceq \Delta_{i} \Delta_{i}^\top\}$.

\end{assumption}
To eliminate the explicit dependence on the unavailable disturbance data matrix $W_i^-$, we next employ the factorization method in \citet{hu2024data}. Specifically, the unknown matrix $W_i^{-}$ admits the decomposition
\begin{equation}\label{deco_t}
W_{i}^{-}=\mathcal{T}_{i}\mathcal{V}_{i}\mathcal{M}_{i},
\end{equation}
where $\mathcal{T}_{i}$ is a similarity transformation matrix that brings $\tilde{E}_{i}$ into its Jordan normal form. The block-diagonal matrix $\mathcal{V}_{i}$ is defined as
\begin{align*}
\mathcal{V}_{i} &:= {\rm blkdiag}(\mathcal{V}_{a_{1i} }(\kappa_{i}(0)), \ldots, \mathcal{V}_{a_{ki} }(\kappa_{i}(0)),  \nonumber\\
&\quad\quad\mathcal{V}_{a_{(k+1)i} }(\kappa_{i}(0)),\ldots,\mathcal{V}_{a_{(k+z)i} }(\kappa_{i}(0))),
\end{align*}
and the matrix $\mathcal{M}_{i}$ is given by
\begin{equation*}
\mathcal{M}_{i} := 
\setlength{\arraycolsep}{2pt}\scriptsize{\begin{bmatrix}
\mathcal{M}_{a_{1i} }(t_{0}) &
\mathcal{M}_{a_{1i} }(t_{1}) &\cdots & \mathcal{M}_{a_{1i} }(t_{T-1}) \\
\vdots &  \ddots & \ddots & \vdots \\
\mathcal{M}_{a_{ki} }(t_{0}) & 
\mathcal{M}_{a_{ki} }(t_{1}) &\cdots & \mathcal{M}_{a_{ki} }(t_{T-1}) \\
\mathcal{M}^{*}_{a_{(k+1)i} }( t_{0}) &
\mathcal{M}^{*}_{a_{(k+1)i} }( t_{1}) &
\cdots &
\mathcal{M}^{*}_{a_{(k+1)i} }( t_{T-1}) \\
\vdots &  \ddots & \ddots &\vdots \\
\mathcal{M}^{*}_{a_{(k+z)i} } (t_{0}) &
\mathcal{M}^{*}_{a_{(k+z)i} } (t_{1}) &
 \cdots &
\mathcal{M}^{*}_{a_{(k+z)i} }(t_{T-1})
\end{bmatrix}},
\end{equation*}
where $a_{1i} + \cdots + a_{ki} + 2(a_{(k+1)i} + \cdots + a_{(k+z)i}) = g_i$, matrices $\mathcal{V}_{a_{mi} }(\kappa_{i}(0))\in \mathbb{R}^{a_{mi}\times a_{mi}}$, $\mathcal{M}_{a_{mi}}(t) : \mathbb{R}_{\ge 0} \to \mathbb{R}^{a_{mi}}$ 
associated with the real eigenvalues $\lambda_{mi}$, $m=1,\ldots,k$, and matrices 
$\mathcal{V}_{a_{(k+n)i}}(\kappa_{i}(0)) \in \mathbb{R}^{2a_{(k+n)i} \times 2a_{(k+n)i}}$, 
$\mathcal{M}^{*}_{a_{(k+n)i}}(t) : \mathbb{R}_{\ge 0} \to \mathbb{R}^{2a_{(k+n)i}}$ 
associated with complex conjugate eigenvalues $\lambda_{(k+n)i}, \lambda_{(k+n)i}^{*}$, $n=1,\ldots,z$.

As shown in \citet{hu2024data}, there exists a full-row-rank row-selection matrix $\digamma_{i}\in \mathbb{R}^{\iota_{i}\times g_{i}}$ that removes redundant block rows, yielding
\begin{equation}\label{no_deco}
\hat{\mathcal{M}}_{i}=\digamma_{i}\mathcal{M}_{i},~~\hat{\mathcal{M}}_{i}\in\mathbb{R}^{\iota_{i}\times T}.
\end{equation}
Let $\digamma_{i}^{R}\in \mathbb{R}^{g_{i}\times \iota_{i}}$ be a right inverse of $\digamma_{i}$, so that $\mathcal{M}_{i}=\digamma^{R}_{i}\hat{\mathcal{M}}_{i}$. It follows that
\begin{equation}\label{no_deco_1}
W_{i}^{-}=\mathcal{T}_{i}\mathcal{V}_{i}\digamma^{R}_{i}\hat{\mathcal{M}}_{i}.
\end{equation}
Substituting \eqref{deco_t}--\eqref{no_deco_1} into \eqref{sysX1} yields
\begin{equation}\label{sysX1_de}
\Xi_{i}^{+} = \hat{A}_i\Xi_{i}^{-} + \hat{B}_i U_{i}^{-} +\hat{D}_i \mathcal{T}_{i}\mathcal{V}_{i}\digamma^{R}_{i}\hat{\mathcal{M}}_{i}+ \Pi_{i},~ \forall\Pi_{i}\in\mathbf{\Pi}_{i}.
\end{equation}
The following lemma, adapted from \cite[Theorem 2]{de2019formulas}, provides a data-driven representation of the closed-loop system dynamics.
\begin{lem}[\citealp{de2019formulas}]\label{lem_ma}
Let matrix $G_i \in \mathbb{R}^{T \times v_i}$ satisfy
\begin{equation}\label{eq:lemma1}
\begin{bmatrix}
K_i \\
I_{v_i} \\
0_{\iota_{i} \times v_i}
\end{bmatrix}
=
\begin{bmatrix}
U_{i}^{-} \\
\Xi_{i}^{-} \\
\hat{\mathcal{M}}_{i}
\end{bmatrix}
G_i. 
\end{equation}
Then, the closed-loop dynamics of system \eqref{arg_com} under the decentralized control strategy \eqref{eq:ct} is given by
\begin{equation}
\begin{aligned}
\dot{\tilde{\xi}}_i 
&= (\hat{A}_i+\hat{B}_i K_{i})\tilde{\xi}_i + \hat{D}_{i} \tilde{w}_{i}\\
&= (\Xi_{i}^{+}-\Pi_{i}) G_i \tilde{\xi}_i + \hat{D}_{i} \tilde{w}_{i}.
 \end{aligned}
\end{equation}
\end{lem}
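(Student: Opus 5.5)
The argument is a direct algebraic substitution using the data identity \eqref{sysX1_de} together with the three block rows of \eqref{eq:lemma1}. Under the decentralized control strategy \eqref{eq:ct}, with $u_i=K_i\tilde\xi_i$ applied to \eqref{arg_com}, the first equality is immediate. It remains to express $\hat A_i+\hat B_iK_i$ through the data.

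Using the first two block rows of \eqref{eq:lemma1}, $K_i=U_i^-G_i$ and $I_{v_i}=\Xi_i^-G_i$, I will write
\begin{equation*}
\hat A_i+\hat B_iK_i=\begin{bmatrix}\hat B_i & \hat A_i\end{bmatrix}\begin{bmatrix}K_i\\ I_{v_i}\end{bmatrix}=\big(\hat B_iU_i^-+\hat A_i\Xi_i^-\big)G_i.
\end{equation*}
From \eqref{sysX1_de}, for the actual noise realization $\Pi_i=\Pi_i^-\in\mathbf{\Pi}_i$, one has $\hat B_iU_i^-+\hat A_i\Xi_i^-=\Xi_i^+-\Pi_i-\hat D_i\mathcal T_i\mathcal V_i\digamma_i^R\hat{\mathcal M}_i$. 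Multiplying on the right by $G_i$, the disturbance term becomes $\hat D_i\mathcal T_i\mathcal V_i\digamma_i^R\hat{\mathcal M}_iG_i$. This vanishes by the third block row $\hat{\mathcal M}_iG_i=0_{\iota_i\times v_i}$, leaving $\hat A_i+\hat B_iK_i=(\Xi_i^+-\Pi_i)G_i$. Substituting into the closed loop gives the claimed second equality, with the $\hat D_i\tilde w_i$ term untouched since it is the online disturbance, not the offline data.

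The only point needing care is the elimination of the unknown disturbance data $W_i^-$. The decomposition \eqref{no_deco_1} is exactly what makes this possible: the unknown factors $\mathcal T_i\mathcal V_i\digamma_i^R$ sit on the left and the known matrix $\hat{\mathcal M}_i$ on the right, so a single linear constraint on $G_i$ annihilates the term regardless of $w_i(0)$. I would also note, though the statement does not require it, that a solution $G_i$ to \eqref{eq:lemma1} exists whenever the stacked matrix $\operatorname{col}(U_i^-,\Xi_i^-,\hat{\mathcal M}_i)$ has full row rank. This strengthens Assumption~\ref{rank}, and it is the natural informativity requirement used when the gain is later parametrized via $G_i$.
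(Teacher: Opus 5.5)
Your argument is correct and is essentially the standard proof behind this lemma. The paper gives no proof of its own and defers to \citet{de2019formulas} together with the factorization of \citet{hu2024data}: one writes $\hat A_i+\hat B_iK_i=[\hat B_i~\hat A_i]\,\mathrm{col}(U_i^-,\Xi_i^-)G_i$, substitutes the data identity \eqref{sysX1_de}, and removes the disturbance term using $\hat{\mathcal M}_iG_i=0$. You also read $\Pi_i$ as the true realization $\Pi_i^-$, rather than the literal ``$\forall\,\Pi_i\in\mathbf{\Pi}_i$'' in \eqref{sysX1_de}, which is the correct reading; robustness over $\mathbf{\Pi}_i$ is handled afterwards by the Petersen-lemma step in Theorem~\ref{lin_the}.
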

The following theorem shows that the proposed decentralized control strategy \eqref{eq:ct} solves the output regulation problem and hence the NE seeking Problem \ref{pro_1}, without requiring knowledge of system models. 
\begin{theorem}\label{lin_the}
Consider the system \eqref{linsys} under Assumptions \ref{no_eig}--\ref{ass_pseudo}.
For data matrices $U_{i}^{-}$, $Y_{i}^{-}$, $\Xi_{i}^{-}$, $\Xi_{i}^{+}$,  $W_{i}^{-}$, $\Pi_{i}^{-}$ in \eqref{Da}-\eqref{Da_n} satisfying Assumptions \ref{rank}--\ref{ass_datanoise}, 
if the SDPs in \eqref{condi1} are feasible for some matrices $P_i \succ 0$, $S_i$, and scalars $\tau_i>0$, $\chi_i>0$, $\alpha_i>0$, then the decentralized  control strategy  \eqref{eq:ct} with  $K_{i}= U_{i}^{-}S_i(\Xi_{i}^{-}S_{i})^{-1}$ solves  Problem \ref{pro_1}. 
\begin{subequations}\label{condi1}
\begin{align}
&  P_i \succeq \tau_i I_{v_i},\label{condi1_2}\\
&\tau_i > 2\alpha_i\gamma_i,\label{condi1_3}\\
&\setlength{\arraycolsep}{2pt}\begin{bmatrix}
\Xi_{i}^{+}S_{i}+(\Xi_{i}^{+}S_{i})^{\top}+\chi_i\Delta_{i} \Delta_{i}^{\top}  & S_{i}^{\top} & P_{i}^{\top}\\
S_{i} &-\chi_i I_{T} & 0\\
P_{i} & 0 &- \alpha_{i}I_{v_{i}} \end{bmatrix} \preceq 0,\label{condi1_1}\\
&P_{i}=\Xi_{i}^{-}S_{i} \label{condi1_5}\\
&0_{\iota_{i} \times v_{i}}  =\hat{\mathcal{M}}_{i} S_{i},\label{condi1_4}
\end{align}
\end{subequations}
where $\gamma_i=\sum_{j \in \mathcal{I}} \|A_{\mathrm{cl,ij}}\|$, $i\in \mathcal{I}$. 
\end{theorem}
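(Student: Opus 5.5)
Set $G_i := S_i(\Xi_i^- S_i)^{-1} = S_i P_i^{-1}$. By \eqref{condi1_5} and \eqref{condi1_4}, $\Xi_i^- G_i = I_{v_i}$ and $\hat{\mathcal M}_i G_i = 0$, and $U_i^- G_i = K_i$ by the definition of $K_i$. Thus $G_i$ satisfies \eqref{eq:lemma1}. Lemma~\ref{lem_ma} then gives the local closed-loop matrix $A_{\mathrm{cl},ii} := \hat A_i + \hat B_i K_i = (\Xi_i^+ - \Pi_i) G_i$ for every admissible $\Pi_i \in \mathbf{\Pi}_i$, and the disturbance term drops out of the data equation \eqref{sysX1_de} because $\hat{\mathcal M}_i G_i = 0$.

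\textbf{Step 1: robust local stability from the data LMI.} I would show that \eqref{condi1_1} implies a Lyapunov inequality with margin, uniformly over the noise set. Pre/post-multiplying the decay condition by $P_i^{-1}$ shows that it suffices to verify
\[
(\Xi_i^+ - \Pi_i)S_i + S_i^\top(\Xi_i^+ - \Pi_i)^\top + \alpha_i^{-1}P_i P_i \preceq 0 .
\]
The cross term is bounded by Young's inequality: $-\Pi_i S_i - S_i^\top \Pi_i^\top \preceq \chi_i \Pi_i\Pi_i^\top + \chi_i^{-1} S_i^\top S_i \preceq \chi_i\Delta_i\Delta_i^\top + \chi_i^{-1}S_i^\top S_i$, using Assumption~\ref{ass_datanoise}. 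A Schur complement of \eqref{condi1_1} then gives the displayed inequality. In terms of $V_i = \tilde\xi_i^\top P_i^{-1}\tilde\xi_i$, this reads $A_{\mathrm{cl},ii}^\top P_i^{-1} + P_i^{-1}A_{\mathrm{cl},ii} \preceq -\alpha_i^{-1} I$.

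\textbf{Step 2: interconnection and small-gain.} The true closed loop is the local system plus couplings. These are the consensus terms $\delta\sum_j a_{ij}\mathbf H_{y_i}\hat y^j$, with $\hat y^j$ containing $y_j = C_j x_j$, and the cross-gradient terms $\hat\Sigma_i \sum_j Q_{ij}\hat y_j^i$ in the internal model. I would write the full closed loop as $\dot\xi = A_{\mathrm{cl}}\xi + (\text{exosignals})$ with blocks $A_{\mathrm{cl},ij}$, and use $V = \sum_i \tilde\xi_i^\top P_i^{-1}\tilde\xi_i$. This gives
\[
\dot V \le \sum_i\Big(-\alpha_i^{-1}\|\tilde\xi_i\|^2 + 2\|P_i^{-1}\|\,\|\tilde\xi_i\|\sum_{j\ne i}\|A_{\mathrm{cl},ij}\|\,\|\tilde\xi_j\|\Big).
\]
Since $\|P_i^{-1}\| \le \tau_i^{-1}$ by \eqref{condi1_2}, condition \eqref{condi1_3} makes $2\gamma_i/\tau_i < 1/\alpha_i$. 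A diagonal-dominance/Young argument, summing $\|\tilde\xi_i\|\|\tilde\xi_j\|$ symmetrically, then yields $\dot V < 0$, so $A_{\mathrm{cl}}$ is Hurwitz.

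This is the \textbf{main obstacle}. The $\gamma_i$ row-sum must be matched against a symmetric quadratic form, so one may need $\gamma_i$ to also bound the column contributions, or to interpret $\gamma_i$ as covering both; I would first try Gershgorin-type weighting. Care is also needed because $\gamma_i$ includes $j=i$, which only makes the condition more conservative.

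\textbf{Step 3: regulation and NE.} With $A_{\mathrm{cl}}$ Hurwitz and the exosystem eigenvalues in the closed right half-plane (Assumption~\ref{no_eig}), the regulator equations are solvable, so $\xi(t)$ converges to a steady state $\Pi_x \tilde w$ and the closed loop has an asymptotically stable equilibrium (manifold). The internal model principle (Definition 3.1: the $p_i$-copy of the minimal polynomial of $E_i$) forces $e_i \to 0$. The multiplier integrator forces $R_i y_i - h_i \to 0$.

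At steady state, the protocol \eqref{eq:cpy} together with connectivity (Assumption~\ref{ass_graph}) gives consensus of the estimates, hence $\hat y^i_{-i} \to y_{-i}$. Therefore $e \to \bar Q y + \bar F + R^\top\varsigma$, so the limits of $(y,\varsigma)$ satisfy \eqref{barq}. By the uniqueness in Assumption~\ref{ass_pseudo}, $y \to y^*$, which solves Problem~\ref{pro_1}.
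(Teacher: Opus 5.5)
Your Steps 1 and 3 follow the paper's proof. In Step 1 you replace the nonsingular Petersen lemma by the Young bound $-\Pi_iS_i-S_i^\top\Pi_i^\top\preceq\chi_i\Delta_i\Delta_i^\top+\chi_i^{-1}S_i^\top S_i$. That bound is exactly the sufficiency half of Petersen's lemma, which is all that is needed, so $P_i^{-1}A_{\mathrm{cl},i}+A_{\mathrm{cl},i}^\top P_i^{-1}\preceq-\alpha_i^{-1}I$ is correctly obtained. It holds in particular for the true noise realization, which is the one for which Lemma~\ref{lem_ma} is an identity; it is not an identity for every admissible $\Pi_i$.

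The genuine gap is the one you flagged in Step 2. Set $\phi:=\operatorname{col}(\|\xi_1\|,\dots,\|\xi_N\|)$. Your estimate is $\dot V_i\le-\phi_i(O\phi)_i$, with $o_{ii}=1/\alpha_i$ and $o_{ij}=-\tfrac{2}{\tau_i}\|A_{\mathrm{cl},ij}\|$ for $j\neq i$. The unweighted sum therefore gives only $\dot V\le-\tfrac12\phi^\top(O+O^\top)\phi$. Condition \eqref{condi1_3} yields strict \emph{row} dominance of $O$, and that does not make $O+O^\top$ positive definite. For example, take $o_{ii}=1$, $o_{j1}=-0.99$ for all $j\ge2$, and all other entries zero: $O$ is row dominant, yet $O+O^\top$ is indefinite once $N\ge6$. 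So the symmetric Young step cannot close, and the hypotheses give you no column bound to fall back on.

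The missing idea is a \emph{weighted} composite Lyapunov function. This is what the paper gets from the M-matrix $O$ together with Khalil's Theorem 9.2. No extra condition on $\gamma_i$ is needed:
\begin{itemize}
\item $O$ has positive diagonal, nonpositive off-diagonal entries and $O\mathbf{1}>0$, so it is a nonsingular M-matrix. Hence $O^{-1}\ge0$ entrywise, and $d:=O^{-\top}\mathbf{1}$ is entrywise positive.
\item With $D:=\operatorname{diag}(d)$, the Z-matrix $DO$ has positive row sums $D(O\mathbf{1})$ and column sums $d^\top O=\mathbf{1}^\top$.
\item Hence $DO+O^\top D$ is symmetric and strictly diagonally dominant with positive diagonal, and therefore positive definite.
\item Taking $V=\sum_i d_iV_i$ gives $\dot V\le-\tfrac12\phi^\top(DO+O^\top D)\phi\le-c\|\xi\|^2$ for some $c>0$.
\end{itemize}
So $A_{\mathrm{cl}}$ is Hurwitz under \eqref{condi1_2}--\eqref{condi1_3} exactly as stated. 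Your ``Gershgorin-type weighting'' instinct is right, but the weights come from the M-matrix property, not from reinterpreting $\gamma_i$.

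Your Step 3 is the same internal-model/KKT argument as the paper's, and it shares two of the paper's tacit points. First, $e_i\to0$ needs $\Sigma_i$ to contain every exosystem mode that reaches $e_i$. This includes the zero mode carrying $F_{ii}^\top$, which the minimal polynomial of $E_i$ need not contain. Second, the consensus conclusion $\hat y^i_{-i}\to y_{-i}$ from \eqref{eq:cpy} requires knowing that the steady-state output is constant.
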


\begin{pf}
For each agent $i \in \mathcal{I}$, define the augmented state $\xi_i := \operatorname{col}(x_i, \hat{y}_{-i}^i, \varsigma_i, \eta_i)$,
and stack $\xi := \operatorname{col}(\xi_1, \ldots, \xi_N)$, $\tilde{w} := \operatorname{col}(\tilde{w}_1, \ldots, \tilde{w}_N)$, 
$\tilde{E} := \operatorname{blkdiag}(\tilde{E}_1, \ldots, \tilde{E}_N)$.
Then, by combining the agent dynamics \eqref{linsys}, the NE model \eqref{eq:NEM}, the communication protocol \eqref{eq:cpy}, the internal model \eqref{internal model}, and the controller \eqref{eq:ct},
the closed-loop system can be written in the standard output regulation form
\begin{subequations}\label{re_eq}
\begin{align}
\dot{\xi} &= A_{\mathrm{cl}}\xi + D_{\mathrm{cl}}\tilde{w}, \label{re_eqa}\\
\dot{\tilde{w}} &= \tilde{E}\tilde{w},\label{re_eqc}\\
e &= C_{\mathrm{cl}} \xi + F_{\mathrm{cl}} \tilde{w},\label{re_eqb}
\end{align}
\end{subequations}
for suitable matrices $A_{\mathrm{cl}}, D_{\mathrm{cl}}, C_{\mathrm{cl}}, F_{\mathrm{cl}}$,
where $A_{\mathrm{cl}}$ is a block matrix with diagonal blocks
$A_{\mathrm{cl,i}}$ and off diagonal blocks $A_{\mathrm{cl,ij}}$ for $i,j\in\mathcal{I}$, 
\begin{align*}
{A}_{\mathrm{cl,i}} &=
\begin{bmatrix}
A_i+B_{i}K_{i}^{1} & B_{i}K_{i}^{2} & B_{i}K_{i}^{3} & B_{i}K_{i}^{4} \\
0 & -\delta \sum_{j\in\mathcal{I}} a_{ij} I & 0 & 0 \\
R_{i}C_{i} & 0 & 0 & 0 \\
\hat{\Sigma}_{i} (Q_{ii}+Q^{\top}_{ii})C_i & 0 & \hat{\Sigma}_{i} R_{i}^\top & \Sigma_{i}
\end{bmatrix},\\
A_{\mathrm{cl,ij}} &=
\begin{bmatrix}
0 & 0 & 0 & 0 \\
\delta a_{ij}\mathbf{H}_{y_i}\mathbf{M}_{y_j}^\top C_j & \delta a_{ij} \mathbf{H}_{y_i} \mathbf{H}_{y_j}^\top & 0 & 0 \\
0 & 0 & 0 & 0 \\
\hat{\Sigma}_{i} Q_{ij} C_j & 0 & 0 & 0
\end{bmatrix}.
\end{align*}


By the classical linear output regulation theorem \cite[Theorem 1.31]{huang2004nonlinear}, under Assumption \ref{no_eig}, it is sufficient to prove that (i) $A_{\mathrm{cl}}$ is Hurwitz; and (ii) the regulator equations 
\begin{subequations}\label{re_XA}
\begin{align}
\tilde{X} \tilde E = A_{\mathrm{cl}}\tilde{X} + D_{\mathrm{cl}}, \label{re_xa}\\ 
\quad 0 = C_{\mathrm{cl}}\tilde{X} + F_{\mathrm{cl}}\label{re_xb},
\end{align}
\end{subequations}
have a unique solution $\tilde{X}$.

We first prove that $A_{\mathrm{cl}}$ is Hurwitz. If \eqref{condi1} is feasible, we can define $G_{i}=S_i P_i^{-1}$. Then, \eqref{condi1_5} and \eqref{condi1_4}, together with the feedback gain $K_{i}= U_{i}^{-}S_i(\Xi_{i}^{-}S_{i})^{-1}$ ensure that condition \eqref{eq:lemma1} is satisfied, thereby guaranteeing consistency with Lemma \ref{lem_ma}.

Applying the Schur complement lemma to \eqref{condi1_1} yields
\begin{equation}\label{th1_1}
\Xi_{i}^{+}S_{i}+(\Xi_{i}^{+}S_{i})^{\top}+\frac{1}{\alpha_{i}}P_{i}^{\top}P_{i}+\chi_i\Delta_{i} \Delta_{i}^{\top} +\frac{1}{\chi_i}S_{i}^{\top}S_{i} \preceq 0.
\end{equation}
This inequality can be expressed more compactly as
\begin{equation}\label{lin_pet}
\Psi_{i}+\chi_i\Phi_{i}^{\top}\Delta_{i} \Delta_{i}^{\top}\Phi_{i}+\frac{1}{\chi_i}\Theta_{i}\Theta_{i}^{\top}\preceq 0,
\end{equation}
where $\Psi_{i}:=\Xi_{i}^{+}S_{i}+(\Xi_{i}^{+}S_{i})^{\top}+\frac{1}{\alpha_{i}}P_{i}^{\top}P_{i}$, $\Phi_{i}:=-I_{v_{i}}$, and $\Theta_{i}:=S_{i}^{\top}$. 
Under Assumption \ref{ass_datanoise}, the nonsingular Petersen lemma yields
\begin{equation}
\Psi_{i}+\Theta_{i}\Pi_{i}^{\top}\Phi_{i}+\Phi_{i}^{\top}\Pi_{i}\Theta_{i}^{\top}\preceq 0,
\end{equation}
for all $\Pi_{i}\in \mathbf{\Pi}_{i}$, which is equivalent to
\begin{equation}\label{th1_3}
(\Xi_{i}^{+}-\Pi_{i})S_{i}+ S_{i}^\top(\Xi_{i}^{+}-\Pi_{i})^{\top} \preceq -\frac{1}{\alpha_{i}}P_{i}^{\top}P_{i}.
\end{equation}
By Lemma \ref{lem_ma}, we have
\begin{equation}\label{th1_2}
{A}_{\mathrm{cl,i}}P_{i} +P_{i}^{\top}{A}_{\mathrm{cl,i}}^{\top}  \preceq -\frac{1}{\alpha_{i}}P_{i}^{\top}P_{i}.
\end{equation}
Consider the Lyapunov candidate 
$V_i(\xi_i) = \xi_i^\top \Gamma_{i} \xi_i$, where $\Gamma_{i}=P_{i}^{-1}$.
Using inequality \eqref{condi1_2}, the derivative of $V_{i}$ along the nominal local dynamics
\begin{align}
&\nabla V_i(\xi_i)^{\top}(\hat{A}_{i}+\hat{B}_{i}K_{i})\xi_i\\
&\quad=\xi_i^\top \!\left(\Gamma_{i}A_{{\rm cl},i}+A_{{\rm cl},i}^\top\Gamma_{i}\right)\!\xi_i \le-\frac{1}{\alpha_i}\|\xi_i\|^2.\nonumber
\end{align}
Furthermore, the lower bound condition in \eqref{condi1_2} implies
\begin{equation}
    \|\nabla V_i(\xi_i)\|=2\|\Gamma_{i}\xi_i\| \leq \frac{2}{\tau_i} \|\xi_i\|.
\end{equation}
We now account for the restored online couplings. For the interaction terms,
\begin{equation}\label{sumj}
\|\sum_{j \in \mathcal{I}} A_{\mathrm{cl,ij}}\xi_j\| \leq \sum_{j \in \mathcal{I}} \|A_{\mathrm{cl,ij}}\| \|\xi_j\|.
\end{equation}
Define the matrix $O = [o_{ij}]$ with entries
\begin{align*}
o_{ij} =
\begin{cases}
\frac{1}{\alpha_i}, & i = j, \\
-\frac{2}{\tau_i} \|A_{\mathrm{cl,ij}}\|, & i \neq j.
\end{cases}
\end{align*}
Under  \eqref{condi1_3}, the matrix $O$ is strictly diagonally dominant with nonpositive off-diagonal entries, which is therefore a nonsingular M-matrix.
Thus, the origin of the homogeneous stacked system  $\dot \xi = A_{\mathrm{cl}}\xi$ is exponentially stable \cite[Theorem 9.2]{khalil2002nonlinear}. Since this system is linear time-invariant, exponential stability of the origin is equivalent to $A_{\mathrm{cl}}$ being Hurwitz.

We now turn to the regulator equations. Since the stacked controller state embeds the internal model associated with the exosystem $\tilde E$, and since $A_{cl}$ has been shown to be Hurwitz, according to \cite[Lemma 1.27]{huang2004nonlinear} and \cite[Theorem 1]{guo2021linear}, the standard internal model argument ensures that the steady-state response of the closed-loop system is well defined and satisfies the corresponding regulator equations  \eqref{re_XA}. Define the shifted variable $\bar\xi:=\xi-\tilde X\tilde w$. Using \eqref{re_eqa}, \eqref{re_eqc}, and \eqref{re_xa}, one obtains $\dot{\bar\xi}=A_{cl}\bar\xi$. Moreover, by \eqref{re_eqb} and \eqref{re_xb}, $e=C_{cl}\bar\xi$. Since \(A_{cl}\) is Hurwitz, it follows that $\bar\xi(t)\to0$, $e_i(t)\to0$, $\forall i\in\mathcal I$. Furthermore, the communication protocol yields asymptotically consistent estimates, i.e., $\hat y^i_{-i}(t)-y_{-i}(t)\to0$. Together with the multiplier dynamics $\dot\varsigma_i=R_i y_i-h_i$, this implies that the limiting closed-loop steady state satisfies the KKT residual conditions associated with the constrained game. Hence, under Assumption \ref{ass_pseudo}, the limiting output coincides with the unique constrained NE \(y^\ast\). Therefore, $y(t)\to y^\ast$, and the decentralized controller solves Problem~\ref{pro_1}.

\end{pf}

\begin{remark}
Although the controller is synthesized from the decoupled system \eqref{arg_com}, the actual closed-loop dynamics \eqref{re_eqa} are coupled. The quantity $\gamma_i=\sum_{j\in\mathcal I}\|A_{cl,ij}\|$ provides a conservative norm-based upper bound on the restored interconnection effect acting on agent~$i$. Hence, the condition $\tau_i > 2\alpha_i\gamma_i$ in \eqref{condi1_3} ensures that the local stabilization margin dominates the interconnection effect.
Moreover, $\gamma_i$ can be estimated in a data-driven way. Since $Y_j=C_jX_j$ and $X_j$ is full row rank under the informativity condition, one has $C_j=Y_jX_j^\dagger$, so that $\|C_j\|$ and thus $\gamma_i$ can be computed using only local data exchange. The resulting bound is conservative due to its norm-based construction, which is inherent to the fully decentralized synthesis.
Finally, stronger couplings generally increase $\gamma_i$, making $\tau_i > 2\alpha_i\gamma_i$ harder to satisfy and shrinking the feasible SDP region. Since the guaranteed decay rate also depends on $\alpha_i$, stronger couplings may require a more conservative choice of $\alpha_i$, leading to a slower guaranteed convergence rate.
\end{remark}

Theorem \ref{lin_the} establishes that an internal model-based controller guarantees NE seeking under disturbances generated by \eqref{disturbance}. We now consider two fundamental special cases: (i) constant disturbances and (ii) no disturbances, both corresponding to an exosystem \eqref{disturbance}  with $E_i = 0_{q_i \times q_i}$. In this setting, the internal model in \eqref{internal model} reduces to an integral control structure
\begin{subequations}\label{control_strategy_coll}
\begin{align}
u_i &=K_{i}{\rm col}(x_i,\hat{y}_{-i}^i,\varsigma_i,\varphi_{i}),\label{state_control_coll}\\
\dot{\varphi}_{i} &= e_{i}, \label{internal model_coll}
\end{align}
\end{subequations}
where $\varphi_{i}\in\mathbb{R}^{p_{i}}$ is the integral state. 
The following corollary extends Theorem \ref{lin_the} by resolving Problem \ref{pro_1} for these cases.

\begin{corollary}\label{corollary}
Consider the system \eqref{linsys} under Assumptions \ref{no_eig}--\ref{ass_pseudo}.
For data matrices $U_{i}^{-}$, $Y_{i}^{-}$, $\Xi_{i}^{-}$, $\Xi_{i}^{+}$,  $W_{i}^{-}$, $\Pi_{i}^{-}$ in \eqref{Da}-\eqref{Da_n} satisfying Assumptions \ref{rank}--\ref{ass_datanoise}, 
if the SDPs in \eqref{condi1} are feasible for some matrices $P_i \succ 0$, $S_i$, and scalars $\tau_i>0$, $\chi_i>0$, $\alpha_i>0$, then the decentralized  control strategy  \eqref{state_control_coll} with  $K_{i}= U_{i}^{-}S_i(\Xi_{i}^{-}S_{i})^{-1}$ solves  Problem \ref{pro_1}. 
\end{corollary}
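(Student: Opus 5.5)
The plan is to treat the corollary as the special case $E_i=0_{q_i\times q_i}$ of Theorem~\ref{lin_the}. I will show that the integral controller \eqref{control_strategy_coll} is exactly the internal model \eqref{internal model} for this exosystem, and then reuse the stability and regulation argument of that proof.

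\textbf{Identify the internal model.} When $E_i=0$, the minimal polynomial of $E_i$ is $s$ (or $E_i$ is absent when $q_i=0$). The augmented generator $\tilde E_i=\mathrm{diag}(0,0,0)$ has minimal polynomial $s$ as well. I can therefore take $\sigma_i=0$ and $\hat\sigma_i=1$. This pair is trivially controllable, so $\Sigma_i=0_{p_i\times p_i}$ and $\hat\Sigma_i=I_{p_i}$ form a $p_i$-copy internal model of $\tilde E_i$ in the sense of Definition~3.1. With this choice, \eqref{internal model} becomes $\dot\eta_i=e_i$, which is \eqref{internal model_coll} with $\varphi_i=\eta_i$.

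\textbf{Transfer the data-based construction.} The data-collection system \eqref{eq:argue_no}, the matrices $\hat A_i,\hat B_i,\hat D_i$, the factorization \eqref{deco_t}--\eqref{no_deco_1}, and Lemma~\ref{lem_ma} all remain valid after substituting $\Sigma_i=0$ and $\hat\Sigma_i=I$. The constant disturbance is absorbed into $\tilde w_i$, whose Jordan form is now purely zero, so $\hat{\mathcal M}_i$ reduces to a constant row. Assumption~\ref{no_eig} holds since all eigenvalues are zero. Feasibility of \eqref{condi1} then yields, exactly as in the proof of Theorem~\ref{lin_the}, the following chain:
\begin{itemize}
\item $G_i=S_iP_i^{-1}$ satisfies \eqref{eq:lemma1};
\item the Petersen-lemma step gives \eqref{th1_2} for every admissible noise;
\item the M-matrix small-gain argument with $\gamma_i$ computed from the off-diagonal blocks $A_{\mathrm{cl},ij}$ (now containing $Q_{ij}C_j$ in place of $\hat\Sigma_iQ_{ij}C_j$) shows that $A_{\mathrm{cl}}$ is Hurwitz.
\end{itemize}

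\textbf{Regulation and NE convergence.} For the regulation step I will give a direct argument instead of citing the general internal-model lemma. Because $\tilde w$ is constant and $A_{\mathrm{cl}}$ is Hurwitz, $\xi(t)$ converges to the unique equilibrium $\xi^\ast=-A_{\mathrm{cl}}^{-1}D_{\mathrm{cl}}\tilde w$. At this equilibrium, the following hold:
\begin{itemize}
\item $\dot\varphi_i=0$ forces $e_i=0$;
\item $\dot\varsigma_i=0$ forces $R_iy_i=h_i$;
\item $\dot{\hat y}^i_{-i}=0$, together with connectedness (Assumption~\ref{ass_graph}), forces all estimates to agree with the true $y$.
\end{itemize}
The equilibrium outputs therefore satisfy \eqref{barq}, and by Assumption~\ref{ass_pseudo} they equal $y^\ast$. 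When there is no disturbance, take $w_i(0)=0$; the same argument applies.

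\textbf{Main obstacle.} The delicate step is the consensus claim: a zero derivative of \eqref{eq:cpy} for all $i$ must imply $\hat y^i=y$. I plan to write the stacked estimate error as the kernel of $(L\otimes I)$ restricted to the estimate coordinates, and to use the fact that the $j$-th block is pinned to $y_j$ at agent $j$. Connectedness of $G$ then propagates $y_j$ to every agent, as in \cite[Theorem 1]{guo2021linear}.
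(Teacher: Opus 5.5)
Your proposal is correct and follows the paper's implicit route: the corollary is Theorem~\ref{lin_the} specialized to $E_i=0$, where the internal model collapses to $(\Sigma_i,\hat\Sigma_i)=(0_{p_i\times p_i},I_{p_i})$ and the data-based Hurwitz argument carries over unchanged. The only difference is in the regulation step, which you verify directly rather than citing the internal-model lemma as the paper does: the equilibrium $\xi^\ast=-A_{\mathrm{cl}}^{-1}D_{\mathrm{cl}}\tilde w$ is exactly $\tilde X\tilde w$ from \eqref{re_XA} with $\tilde E=0$, the integrator row yields \eqref{re_xb}, and your grounded-Laplacian argument supplies the estimate consistency that the paper only asserts, so your version is more self-contained.
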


\begin{remark}
For agents subject to constant disturbances or no disturbances, the $ p_{i}$-copy internal model $(\Sigma_{i},\hat{\Sigma}_{i})$ in strategy \eqref{eq:ct} can be replaced by $(0_{p_{i}\times p_{i}},I_{p_{i}})$ as in controller \eqref{control_strategy_coll}. 
In this case, integral control alone is sufficient to reject constant disturbances, eliminating the need for a higher-order internal model and thus reducing implementation complexity.
The constant-disturbance scenario considered in  \citet{romano2025game} is therefore a special case of the proposed general framework.
\end{remark}

\section{Decentralized Data-Driven NE Seeking for Nonlinear Dynamics}\label{sec_non}
The preceding section developed an internal model-based data-driven design for the linear case of equality-constrained NE seeking under partial-decision information. We now extend this framework to a class of nonlinear systems. Since tractable exact regulation is generally difficult to achieve for general nonlinear systems in the present data-driven setting, especially in the presence of equality constraints and partial-decision information, we focus on systems with constant disturbances and adopt the integral-control structure in Corollary \ref{corollary}. This yields a nonlinear extension of the present data-driven framework, in contrast to the model-based setting considered in \cite{romano2025game}.

Consider $N$ agents with nonlinear dynamics
\begin{equation}\label{nonlinsys}
\begin{aligned}
\dot{x}_i &= \mathcal{A}_{i}\mathcal{Z}(x_{i}) + B_iu_i + D_i w_i,\\
y_i &=  \mathcal{C}_{i}\mathcal{Z}(x_{i}), \quad i\in\mathcal{I},
\end{aligned}
\end{equation}
where $x_{i}$, $u_{i}$, and $y_{i}$ are as in Section \ref{sec_pre} and $w_{i}$ is a constant disturbance. The function $\mathcal{Z}(x_{i})={\rm col}(x_{i},Q(x_{i}))$ stacks the state and nonlinear components $Q(x_{i}) : \mathbb{R}^{n_{i}} \rightarrow \mathbb{R}^{h_{i}-n_{i}} $.  
The matrices  $\mathcal{A}_{i} \in \mathbb{R}^{{n_{i}} \times h_{i}}$, $B_i \in \mathbb{R}^{n_i \times m_i}$, $\mathcal{C}_{i} \in \mathbb{R}^{p_{i} \times h_{i}}$,  $D_i \in \mathbb{R}^{n_i \times q_i}$ are unknown.
Consistent with the partition of $Z(x_i)$, we partition
    $\begin{bmatrix}
\begin{smallmatrix}
\mathcal{A}_{i} \\ \mathcal{C}_{i}
\end{smallmatrix}\end{bmatrix}=\begin{bmatrix}
\begin{smallmatrix}
\bar{\mathcal{A}}_{i} &\tilde{\mathcal{A}}_{i} \\ \bar{\mathcal{C}}_{i} &\tilde{\mathcal{C}}_{i}
\end{smallmatrix}\end{bmatrix}$.

With the integral controller of Corollary \ref{corollary}, the design reduces to synthesizing decentralized stabilizing gains for the augmented system via data-driven LMIs. We adopt the decentralized control strategy
\begin{equation}\label{nonstate_control}
u_{i} =\mathcal{K}_{i} {\rm col}(x_i,\hat{y}_{-i}^i,\varsigma_i,\varphi_i,Q(x_{i})),
\end{equation}
where $\mathcal{K}_{i}=[\mathcal{K}_{i}^{1}~~ \mathcal{K}_{i}^{2}~~ \mathcal{K}_{i}^{3}~~ \mathcal{K}_{i}^{4}~~\mathcal{K}_{i}^{5}]$ denotes the decentralized control gain to be designed.

We now formalize the NE seeking problem for agents with unknown nonlinear dynamics.
\begin{problem}\label{pro_2}
Under Assumptions \ref{ass_graph} and \ref{ass_pseudo}, consider the equality-constrained game \eqref{min_cost} for $N$ agents governed by the unknown nonlinear dynamics \eqref{nonlinsys} and subject to constant disturbances $w_i$.
In the partial-decision information setting, the goal is to design a decentralized control strategy $u_{i}$ for each agent $i\in\mathcal{I}$ such that
(i) all closed-loop signals are bounded for all $t\geq0$; and
(ii) the output $y$ converges to the NE $y^*$.
\end{problem}

Analogous to the linear case, we excite the following local augmented dynamics
\begin{subequations}\label{eq:argue_non}
\begin{align}
\dot{x}_i &= \mathcal{A}_i \mathcal{Z}(x_i)  + B_i u_i  +\tilde{D}_{i} \tilde{w}_i,\label{eq:argue_a}\\
\dot{\breve{y}}_{-i}^i  &= -\delta \sum_{j \in \mathcal{I}} a_{ij}  \breve{y}_{-i}^i, \\
\dot{\varsigma}_i  &= R_iy_{i} - h_i, \\
\dot{\breve{\varphi}}_i &=(Q_{ii}+Q^{\top}_{ii}) y_{i} + F_{ii}^{\top} + R_i^\top \varsigma_i,
\end{align}
\end{subequations}
where $\breve{\varphi}_{i}$ denotes the integral state obtained by
excluding inter-agent terms of  \eqref{internal model_coll}, and $\tilde{w}_i = \text{col}(w_i, 1, 1)$.
Applying a persistently exciting input sequence $ \{u_{i} (l)\}_{l=0}^{T-1} $ to \eqref{eq:argue_non} yields trajectories $ \{x_{i} (l)\}_{l=0}^{T-1} $, $\{\breve{y}_{-i}^{i}(l)\}_{l=0}^{T-1}$, $ \{\varsigma_{i} (l)\}_{l=0}^{T-1} $, $ \{\breve{\varphi}_{i} (l)\}_{l=0}^{T-1} $, $ \{y_{i} (l)\}_{l=0}^{T-1} $, and $\{\pi_{i}(l)\}_{l=0}^{T-1}$, where $\pi_{i}$ is the measurement noise in data collection as defined in Section \ref{sec_lin}.
Further, the forward difference is utilized to obtain the corresponding derivative terms.

Define the augmented state vectors as
\begin{align}
 \breve{z}_i&={\rm col}(x_{i},\breve{y}_{-i}^i, \varsigma_i,\breve{\varphi}_i)\in \mathbb{R}^{\zeta_i}, \\
 \breve{\mathcal{Z}}_{i}&={\rm col}(x_{i},\breve{y}_{-i}^i, \varsigma_i,\breve{\varphi}_i,Q(x_{i}))\in \mathbb{R}^{\varpi_i}.
\end{align}
The collected data trajectories are organized into matrices as follows
\begin{subequations}\label{datamatrix}
\begin{align}
Z_{i}^{-} &=\begin{bmatrix} \breve{\mathcal{Z}}_{i}(0) & \breve{\mathcal{Z}}_{i}(1) & \ldots & \breve{\mathcal{Z}}_{i}(T-1) \end{bmatrix},\label{datamatrix_z-}\\
Z_{i}^{+} &= \begin{bmatrix} \dot{\breve{z}}_{i}(0) & \dot{\breve{z}}_{i}(1) & \ldots & \dot{\breve{z}}_{i}(T-1)\end{bmatrix},\label{datamatrix_z+}
\end{align}
\end{subequations}
which satisfy
\begin{equation}\label{nondatasystem}
Z_{i}^{+} = \hat{\mathcal{A}}_{i}Z_{i}^{-} + \hat{B}_iU_{i}^{-} + \hat{\mathcal{D}}_{i}W_{i}^{-}+\Pi_{i}^{-},
 \end{equation}
where
\begin{align*}
\hat{\mathcal{A}}_{i} &=\setlength{\arraycolsep}{2pt}\scriptsize{
\begin{bmatrix}
\bar{\mathcal{A}}_i & 0 & 0 & 0 & \tilde{\mathcal{A}}_{i}\\
0 & -\delta \sum_{j\in\mathcal{I}} a_{ij} I & 0 & 0 & 0 \\
R_{i}\bar{\mathcal{C}}_{i} & 0 & 0 & 0 & R_{i}\tilde{\mathcal{C}}_i \\
 (Q_{ii}+Q^{\top}_{ii})\bar{\mathcal{C}}_i & 0 &  R_{i}^\top & 0 &(Q_{ii}+Q^{\top}_{ii})\tilde{\mathcal{C}}_i
\end{bmatrix}},
\end{align*}
$\hat{\mathcal{D}}_{i} =
\begin{bmatrix}
\tilde{D}_i^{\top} &
0 &
 - \tilde{N}_{ii}^{\top} &
\tilde{F}_{ii}^{\top} 
\end{bmatrix}^{\top}$, $\hat{B}_{i}$ is defined in \eqref{arg_com},
and data matrices $U_{i}^{-}$, $W_{i}^{-}$, and $\Pi_{i}^{-}$ are defined in \eqref{Da}-\eqref{Da_n}.

We restate Assumption \ref{rank} for the nonlinear setting.
\begin{assumption}\label{rank_non}
The matrix 
$\begin{bmatrix}
\begin{smallmatrix}
U_{i}^{-} \\ Z_{i}^{-}
\end{smallmatrix}
\end{bmatrix}$
has full row rank.
\end{assumption}

In light of the analysis in Section \ref{sec_lin} and the results in \citet{hu2024data},
the unknown constant disturbance data matrix $W_{i}^{-}$  can be expressed as
\begin{align}
W_{i}^{-}=\check{\mathcal{V}}_{i}\check{\digamma}_{i}^{R}\check{\mathcal{M}}_{i},
\end{align}
where $\check{\mathcal{V}}_{i}:=\text{blkdiag}\{\tilde{w}_{i1}, \ldots,\tilde{w}_{ig_{i}}\}\in \mathbb{R}^{g_{i}\times g_{i}}$, $\check{\digamma}_{i}^{R}\in \mathbb{R}^{ g_{i}\times 1}$, and $\check{\mathcal{M}}_{i}=\mathbf{1}_{1\times T}$.
Substituting this into \eqref{nondatasystem} yields
\begin{equation}\label{nondatasystem_frac}
Z_{i}^{+} = \hat{\mathcal{A}}_{i}Z_{i}^{-} + \hat{B}_{i}U_{i}^{-} + \hat{\mathcal{D}}_{i}\check{\mathcal{V}}_{i}\check{\digamma}_{i}^{R}\check{\mathcal{M}}_{i}+\Pi_{i},~ \forall\Pi_{i}\in\mathbf{\Pi}_{i}.
\end{equation}
According to \cite[Lemma 3]{hu2024data}, Lemma \ref{lem_ma} can be extended to the nonlinear case as follows.
\begin{lem}\label{lem_ma_no}
Let the matrix $\mathcal{G}_i \in \mathbb{R}^{T \times \varpi_i}$ satisfy
\begin{equation}\label{eq:lemma2}
\begin{bmatrix}
K_i \\
I_{\varpi_i} \\
0_{1 \times \varpi_i}
\end{bmatrix}
=
\begin{bmatrix}
U_{i}^{-} \\
Z_{i}^{-} \\
\check{\mathcal{M}}_i
\end{bmatrix}
\mathcal{G}_i. 
\end{equation}
Let $\mathcal{G}_i$ be partitioned as $\mathcal{G}_i=[\mathcal{G}_{1i}~~\mathcal{G}_{2i}]$, where $\mathcal{G}_{1i}\in \mathbb{R}^{T \times \zeta_{i}}$ and $\mathcal{G}_{2i}\in \mathbb{R}^{T \times (\varpi_i-\zeta_{i})}$.
Then, under the decentralized control strategy \eqref{nonstate_control}, 
system \eqref{eq:argue_non} admits the closed-loop dynamics
\begin{align}
\dot{\breve{z}}_{i}&=(\hat{\mathcal{A}}_{i} + \hat{B}_{i}\mathcal{K}_{i})\breve{\mathcal{Z}}_{i} + \hat{\mathcal{D}}_{i} \tilde{w}_{i}\\
&= (Z_{i}^{+}-\Pi_{i}) \mathcal{G}_{1i} \breve{z}_{i}+(Z_{i}^{+}-\Pi_{i}) \mathcal{G}_{2i} Q(x_{i})  + \hat{\mathcal{D}}_{i}\tilde{w}_{i}.\nonumber
\end{align}
\end{lem}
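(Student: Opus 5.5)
The plan is to mirror the argument behind Lemma \ref{lem_ma} (\cite[Theorem 2]{de2019formulas}), working with the lifted regressor $\breve{\mathcal{Z}}_i$ in place of the state.

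\textbf{Step 1: write the closed loop in model form.} Substituting the decentralized strategy \eqref{nonstate_control}, i.e. $u_i=\mathcal{K}_i\breve{\mathcal{Z}}_i$, into the local augmented dynamics \eqref{eq:argue_non} gives
\[
\dot{\breve z}_i=(\hat{\mathcal{A}}_i+\hat B_i\mathcal{K}_i)\breve{\mathcal{Z}}_i+\hat{\mathcal{D}}_i\tilde w_i .
\]
This holds because $y_i=\mathcal{C}_i\mathcal{Z}(x_i)$ is linear in $\breve{\mathcal{Z}}_i$, so every right-hand side in \eqref{eq:argue_non} is a linear combination of the entries of $\breve{\mathcal{Z}}_i$, of $u_i$, and of $\tilde w_i=\mathrm{col}(w_i,1,1)$. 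The last two entries of $\tilde w_i$ carry the affine terms $-h_i$ and $F_{ii}^\top$. This is precisely how $\hat{\mathcal{A}}_i$ and $\hat{\mathcal{D}}_i$ were defined.

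\textbf{Step 2: express the closed-loop matrix through data.} Using \eqref{eq:lemma2},
\[
\hat{\mathcal{A}}_i+\hat B_i\mathcal{K}_i=[\hat B_i~~\hat{\mathcal{A}}_i]\begin{bmatrix}U_i^-\\ Z_i^-\end{bmatrix}\mathcal{G}_i .
\]
By \eqref{nondatasystem_frac}, this equals
\[
(Z_i^+-\Pi_i-\hat{\mathcal{D}}_i\check{\mathcal{V}}_i\check{\digamma}_i^R\check{\mathcal{M}}_i)\,\mathcal{G}_i .
\]
The third block row of \eqref{eq:lemma2} gives $\check{\mathcal{M}}_i\mathcal{G}_i=0$, so the unknown disturbance contribution vanishes and $\hat{\mathcal{A}}_i+\hat B_i\mathcal{K}_i=(Z_i^+-\Pi_i)\mathcal{G}_i$. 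This identity holds for the true noise realization $\Pi_i^-$; the lemma's ``$\Pi_i$'' is read as that realization, consistent with Lemma \ref{lem_ma}.

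\textbf{Step 3: split according to the partition.} Write $\breve{\mathcal{Z}}_i=\mathrm{col}(\breve z_i,Q(x_i))$ and partition $\mathcal{G}_i=[\mathcal{G}_{1i}~~\mathcal{G}_{2i}]$ conformably. Then
\[
(Z_i^+-\Pi_i)\mathcal{G}_i\breve{\mathcal{Z}}_i=(Z_i^+-\Pi_i)\mathcal{G}_{1i}\breve z_i+(Z_i^+-\Pi_i)\mathcal{G}_{2i}Q(x_i).
\]
Adding back $\hat{\mathcal{D}}_i\tilde w_i$ yields the claimed closed-loop representation.

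The only delicate point is the bookkeeping in Step 1. One has to check that the ordering of $\breve{\mathcal{Z}}_i$ matches the column partition of $\hat{\mathcal{A}}_i$, with the $Q(x_i)$ columns last, and that the constant terms are absorbed into $\hat{\mathcal{D}}_i\tilde w_i$ rather than into $\hat{\mathcal{A}}_i$. Two further checks are needed: the factorization of $W_i^-$ must be exact, so that $\check{\mathcal{M}}_i\mathcal{G}_i=0$ kills the disturbance term entirely; and the row-dimension conventions must be consistent ($\check{\mathcal{M}}_i$ has a single row, matching $0_{1\times\varpi_i}$). Solvability of \eqref{eq:lemma2} itself is not needed for the statement, since it is assumed, but it is guaranteed generically by Assumption \ref{rank_non} together with the rank condition that $\mathbf{1}_{1\times T}$ is independent of the rows of $\mathrm{col}(U_i^-,Z_i^-)$.
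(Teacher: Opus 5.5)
Your proposal is correct and takes essentially the same approach as the paper. The paper gives no separate proof; it invokes \cite[Lemma 3]{hu2024data} as the nonlinear analogue of Lemma \ref{lem_ma}, and that argument is exactly your identity $\hat{\mathcal{A}}_i+\hat B_i\mathcal{K}_i=(Z_i^+-\Pi_i-\hat{\mathcal{D}}_iW_i^-)\mathcal{G}_i$, the cancellation of the disturbance term through $\check{\mathcal{M}}_i\mathcal{G}_i=0$, and the split of $\breve{\mathcal{Z}}_i=\mathrm{col}(\breve z_i,Q(x_i))$ along $\mathcal{G}_i=[\mathcal{G}_{1i}~~\mathcal{G}_{2i}]$, with $K_i$ in \eqref{eq:lemma2} read as $\mathcal{K}_i$ and $\Pi_i$ read as the true noise realization, as you do.
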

To restrict the admissible nonlinearities in a form compatible with the S-procedure and LMI analysis, we impose the following quadratic constraint on the augmented state and the nonlinear term.

\begin{assumption}\label{nonlin_ass}
 For each $i\in\mathcal I$, the state $\check z_i$ and the nonlinearity $Q(x_i)$ satisfy
\begin{equation}\label{non_cons}
\begin{bmatrix}
\breve{z}_{i} \\ Q(x_{i})
\end{bmatrix}^{\top}
\begin{bmatrix}
\mathcal{Q}_{i} & \mathcal{X}_{i}  \\
\mathcal{X}_{i} ^{\top} & \mathcal{R}_{i} 
\end{bmatrix}
\begin{bmatrix}
\breve{z}_{i} \\ Q(x_{i})
\end{bmatrix} \succeq 0,
\end{equation}
 where $\mathcal{Q}_{i} = \mathcal{Q}_{i}^{\top} \succeq 0\in \mathbb{R}^{\zeta_{i}\times \zeta_{i}}$, $\mathcal{R}_{i}  \prec 0\in \mathbb{R}^{(\varpi_i-\zeta_{i})\times(\varpi_i- \zeta_{i})}$, and $\mathcal{X}_{i}\in \mathbb{R}^{\zeta_{i}\times(\varpi_i-\zeta_{i}) } $ are known matrices.
\end{assumption}

\begin{remark}
Compared with the linear case, the nonlinear extension introduces two main challenges. The first is nonlinear data representation, since the direct data parametrization available for linear dynamics is no longer applicable once nonlinear terms are present. In this paper, this issue is addressed through the lifted representation $Z(x_i)=\operatorname{col}(x_i,Q(x_i))$. The second is controller design and stability analysis. In the nonlinear setting, tractable synthesis and verifiable stability conditions are substantially harder to obtain, especially under noisy data and distributed couplings. To ensure tractable SDP conditions, Assumption \ref{nonlin_ass} imposes quadratic constraints on the nonlinearities. 
While structured, this quadratic constraint setting still covers a meaningful class of nonlinearities, including Lipschitz/norm-bounded, sector-bounded, and monotone nonlinearities, which effectively capture real-world physical phenomena such as trigonometric dynamics in mechanical and aerospace systems, actuator saturation, and friction.
\end{remark}

With this constraint in place, the following theorem demonstrates that the decentralized control strategy \eqref{nonstate_control} solves Problem \ref{pro_2} in the nonlinear setting.

\begin{theorem}\label{nonlin_the}
Consider the system \eqref{nonlinsys} under Assumptions \ref{ass_graph}--\ref{ass_pseudo}. 
Let the data matrices $U_{i}^{-}$, $Z_{i}^{-}$, $Z_{i}^{+}$,  $W_{i}^{-}$, $\Pi_{i}^{-}$ in \eqref{datamatrix} satisfy Assumptions \ref{ass_datanoise}--\ref{rank_non}, and assume the nonlinearity satisfies Assumption \ref{nonlin_ass}. 
If the SDPs in \eqref{NONSDP} are feasible for some matrices 
$\mathcal{P}_{i}\succ 0$, $ \mathcal{S}_{i}$, $ \mathcal{G}_{2i}$,
and scalars  $\tau_i>0$,  $\chi_i>0$, $ \alpha_i >0 $, then the decentralized  control strategy  \eqref{nonstate_control} with  $\mathcal{K}_{i} = U_{i}^{-} 
\begin{bmatrix} \mathcal{S}_{i}\mathcal{P}_{i}^{-1} & \mathcal{G}_{2i}\end{bmatrix}$ solves  Problem \ref{pro_2}. 
\begin{subequations}\label{NONSDP}
\begin{align}
&\mathcal{P}_{i} \succeq \tau_i I_{\zeta_i},\label{NONcondi1_2}\\
&\tau_i > 2\alpha_i\gamma_i,\label{NONcondi1_3}\\
&\setlength{\arraycolsep}{2pt}\tiny{\begin{bmatrix}
  M(\mathcal{S}_{i},\chi_i) & Z_{i}^{+}\mathcal{G}_{2i}+\mathcal{P}_{i}\mathcal{X}_{i} & \mathcal{P}_{i}\mathcal{Q}_{i}^{1/2} &\mathcal{P}_{i}& \mathcal{S}_{i}^{\top}\\
  \star & \mathcal{R}_{i} & 0_{(\varpi_i-\zeta_i) \times \zeta_i}&0_{(\varpi_i-\zeta_i) \times \zeta_i}& \mathcal{G}_{2i}^{\top}\\
  \star & \star & -I_{\zeta_i}&0_{\zeta_i \times \zeta_i}&0_{\zeta_i \times T}\\
  \star&\star&\star &-\alpha_i I_{\zeta_{i}} &0_{ \zeta_i\times T}\\
  \star &\star&\star&\star&-\chi_i I_{T}\\
  \end{bmatrix}\!\preceq 0},\label{NONSDPc}\\
  &Z_{i}^{-}\mathcal{S}_{i} = \begin{bmatrix} \mathcal{P}_{i} \\ 0_{(\varpi_i-\zeta_i) \times \zeta_i} \end{bmatrix},\label{NONSDPb}\\
  &Z_{i}^{-}\mathcal{G}_{2i} = \begin{bmatrix} 0_{\zeta_i \times (\varpi_i-\zeta_i)} \\ I_{\varpi_i-\zeta_i} \end{bmatrix},\label{NONSDPd}\\
  &\mathbf{1}_{1 \times T}\begin{bmatrix}\mathcal{S}_{i} &\mathcal{G}_{2i}\end{bmatrix}=0_{1\times \varpi_i},\label{NONSDPe}
\end{align}
\end{subequations}
where $M(\mathcal{S}_{i},\chi_i)=Z_{i}^{+}\mathcal{S}_{i} +(Z_{i}^{+}\mathcal{S}_{i})^\top+\chi_i\Delta_{i}\Delta_{i}^{\top}$,
$\gamma_i=\sum_{j \in \mathcal{I}} \|\mathcal{A}_{\mathrm{cl,ij}}\|$,
$i\in \mathcal{I}$.
\end{theorem}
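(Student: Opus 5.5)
The plan follows the proof of Theorem \ref{lin_the}, with the nonlinear term absorbed through the S-procedure. First, from a feasible solution of \eqref{NONSDP} I build $\mathcal{G}_i=[\mathcal{S}_i\mathcal{P}_i^{-1}~~\mathcal{G}_{2i}]$. Constraints \eqref{NONSDPb}, \eqref{NONSDPd} and \eqref{NONSDPe}, together with $\mathcal{K}_i=U_i^-\mathcal{G}_i$, give exactly \eqref{eq:lemma2}. Lemma \ref{lem_ma_no} then yields, for the true noise realization $\Pi_i^-\in\mathbf{\Pi}_i$, the local closed-loop representation
\[
\dot{\breve z}_i=(Z_i^+-\Pi_i)\mathcal{S}_i\mathcal{P}_i^{-1}\breve z_i+(Z_i^+-\Pi_i)\mathcal{G}_{2i}Q(x_i)+\hat{\mathcal{D}}_i\tilde w_i .
\]

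Next I remove the noise in \eqref{NONSDPc}. A Schur complement on the last block row and column ($-\chi_iI_T$) produces the terms $\chi_i\Delta_i\Delta_i^\top$ and $\chi_i^{-1}[\mathcal{S}_i~\mathcal{G}_{2i}]^\top[\mathcal{S}_i~\mathcal{G}_{2i}]$. The nonsingular Petersen lemma, applied as in \eqref{lin_pet} with $\Phi_i=[-I~~0]$ and $\Theta_i=[\mathcal{S}_i~\mathcal{G}_{2i}]^\top$, replaces these by the cross terms involving $\Pi_i$, for all $\Pi_i\in\mathbf{\Pi}_i$. Schur complements on the $-I$ and $-\alpha_iI$ blocks then give
\[
\begin{bmatrix} \mathcal{A}_{1}\mathcal{P}_i+\mathcal{P}_i\mathcal{A}_{1}^\top+\mathcal{P}_i\mathcal{Q}_i\mathcal{P}_i+\alpha_i^{-1}\mathcal{P}_i^2 & \mathcal{A}_{2}+\mathcal{P}_i\mathcal{X}_i\\ \star & \mathcal{R}_i\end{bmatrix}\preceq 0,
\]
where $\mathcal{A}_1=(Z_i^+-\Pi_i)\mathcal{S}_i\mathcal{P}_i^{-1}$ and $\mathcal{A}_2=(Z_i^+-\Pi_i)\mathcal{G}_{2i}$. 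Congruence with $\mathrm{blkdiag}(\mathcal{P}_i^{-1},I)$ and evaluation at $\mathrm{col}(\breve z_i,Q(x_i))$ then give, for $V_i=\breve z_i^\top\mathcal{P}_i^{-1}\breve z_i$, the bound
\[
\dot V_i\big|_{\rm nominal}\le-\alpha_i^{-1}\|\breve z_i\|^2-\begin{bmatrix}\breve z_i\\ Q(x_i)\end{bmatrix}^\top\begin{bmatrix}\mathcal{Q}_i&\mathcal{X}_i\\ \star&\mathcal{R}_i\end{bmatrix}\begin{bmatrix}\breve z_i\\ Q(x_i)\end{bmatrix}\le-\alpha_i^{-1}\|\breve z_i\|^2 .
\]
The last step uses Assumption \ref{nonlin_ass}, i.e. the S-procedure with unit multiplier. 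I expect the main obstacle to be the bookkeeping of signs here: the quadratic form must enter with the correct sign, and the decomposition $\mathcal{Q}_i=\mathcal{Q}_i^{1/2}\mathcal{Q}_i^{1/2}$ must match the $-I$ block. Getting this right may require flipping the sign convention of the constraint or rescaling.

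Before doing this, I must handle the constant disturbance and the equilibrium. I will work with error coordinates $\tilde z=z-z^\star$, where $z^\star$ is an equilibrium of the coupled closed loop. The equilibrium exists because, at steady state, $\dot\varphi_i=e_i=0$, $\dot\varsigma_i=0$ and consensus of the estimates hold. Together these force the KKT system \eqref{barq}, whose unique solution is guaranteed by Assumption \ref{ass_pseudo}. The remaining states are determined because the linearized map is invertible, which follows from the stability proven below. The nonlinearity must satisfy Assumption \ref{nonlin_ass} in the shifted coordinates; I will assume the constraint is understood incrementally, as is implicit in the setup. This is the second delicate point.

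Finally, I restore the interconnections $\mathcal{A}_{\mathrm{cl},ij}$, which have the same structure as $A_{\mathrm{cl},ij}$ in Theorem \ref{lin_the}, with $C_j$ replaced by $\bar{\mathcal C}_j$ and $\tilde{\mathcal C}_j$. I use $\|\nabla V_i\|\le 2\tau_i^{-1}\|\tilde z_i\|$ from \eqref{NONcondi1_2}. Then \eqref{NONcondi1_3} makes the comparison matrix $O$ a strictly diagonally dominant nonsingular M-matrix, and \cite[Theorem 9.2]{khalil2002nonlinear} gives exponential stability of $z^\star$ for the composite system. This yields boundedness of all signals. Convergence $e_i\to0$, $R_iy_i\to h_i$ and $\hat y_{-i}^i\to y_{-i}$ then gives $y\to y^\ast$ by Assumption \ref{ass_pseudo}.
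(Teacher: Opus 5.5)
Up to the Lyapunov inequality your argument is the paper's. The choice $\mathcal{G}_{1i}=\mathcal{S}_i\mathcal{P}_i^{-1}$ together with \eqref{NONSDPb}--\eqref{NONSDPe} gives \eqref{eq:lemma2}. The Schur complements and the Petersen lemma applied to \eqref{NONSDPc} give \eqref{nos_cons}, and evaluating at $\mathrm{col}(\mathcal{P}_i^{-1}\breve z_i,Q(x_i))$ with Assumption \ref{nonlin_ass} gives \eqref{nonlmi_6}. The M-matrix step is also identical. Your sign worry is unfounded: after the congruence, the quadratic form of Assumption \ref{nonlin_ass} enters with a plus sign in a matrix that is $\preceq 0$. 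It is therefore subtracted exactly as in your displayed bound, and no flipping or rescaling is needed.

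\textbf{The gap: the shift to $\tilde z=z-z^\star$.} The inequality you derived certifies decay of $V_i$ centred at the origin only, because Assumption \ref{nonlin_ass} constrains the pair $(\breve z_i,Q(x_i))$ itself. To repeat it in shifted coordinates you need $\mathrm{col}(\tilde z_i,Q(x_i)-Q(x_i^\star))$ to satisfy the same constraint. That is an incremental, slope-type quadratic constraint, which Assumption \ref{nonlin_ass} does not imply (a sector bound is not a slope bound). It is not ``implicit in the setup''. The same incremental bound would also be needed to dominate the coupling through $\tilde{\mathcal C}_j(Q(x_j)-Q(x_j^\star))$ by $\|\tilde z_j\|$. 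As written, you are proving a variant of the theorem under an extra hypothesis.

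\textbf{Existence of $z^\star$ is not established.} The KKT argument only lists necessary conditions an equilibrium must satisfy. Appealing to invertibility of ``the linearized map'' via ``the stability proven below'' is circular, since that stability is proved in coordinates that presuppose $z^\star$. For a nonlinear map it would also be at best local.

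\textbf{How the paper proceeds.} The paper never shifts. It keeps $V_i=\breve z_i^\top\mathcal{P}_i^{-1}\breve z_i$ and obtains exponential stability of the origin of the disturbance-free interconnection. It then infers boundedness under constant $\tilde w$. This is legitimate because the dissipation bound is global and $\hat{\mathcal D}_i\tilde w_i$ enters additively, so the composite Lyapunov function satisfies $\dot V\le -c\|z\|^2+d\|z\|$ for constants $c,d>0$. That settles item (i) of Problem \ref{pro_2} under the stated hypotheses. For item (ii), the paper only appeals to the KKT/output-regulation reasoning of Theorem \ref{lin_the}. Your instinct that boundedness alone does not force $e_i=\dot\varphi_i\to 0$ in a nonlinear loop is sound, but the remedy has to be explicit rather than a silent strengthening of Assumption \ref{nonlin_ass}.

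\textbf{How to repair your route.} State the incremental quadratic constraint as an additional hypothesis. Then obtain existence and uniqueness of $z^\star$ from the resulting contraction of the closed-loop flow in the weighted norm of the composite Lyapunov function, where the constant terms cancel in differences, not from a linearization.
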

\begin{pf}
The proof proceeds along the same lines as that of Theorem \ref{lin_the}. We therefore only emphasize the additional arguments required by the nonlinear lifted term $Q(x_{i})$.  
Let $\mathcal{G}_{1i} = \mathcal{S}_{i} \mathcal{P}_{i}^{-1}$. 
By \eqref{NONSDPb}--\eqref{NONSDPe} and the definition of $\mathcal{K}_{i} = U_{i}^{-} 
\begin{bmatrix} \mathcal{S}_{i}\mathcal{P}_{i}^{-1} & \mathcal{G}_{2i}\end{bmatrix}$, the condition \eqref{eq:lemma2} in Lemma \ref{lem_ma_no} is satisfied.

Next, applying the Schur complement and the nonsingular Petersen lemma to \eqref{NONSDPc} yields, for all admissible
$\Pi_{i}\in\mathbf{\Pi}_{i}$,
\begin{align}\label{nos_cons}
\begin{bmatrix}
\hat{M}(\mathcal{P}_{i})+\frac{1}{\alpha_{i}}\mathcal{P}_{i}^{2}+\mathcal{P}_{i}\mathcal{Q}_{i}\mathcal{P}_{i} & (Z_{i}^{+}-\Pi_{i})\mathcal{G}_{2i}+P_{i}\mathcal{X}_{i}\\
  \star& \mathcal{R}_{i} 
  \end{bmatrix}\preceq 0,
\end{align}
where $\hat{M}(\mathcal{P}_{i})=(Z_{i}^{+}-\Pi_{i})\mathcal{P}_{i}+\mathcal{P}_{i}(Z_{i}^{+}-\Pi_{i})^{\top}$.
Pre- and post-multiplying the above inequality by ${\rm col}( \mathcal{P}_{i}^{-1}z_i,  Q(x_{i}))$, and then invoking Assumption \ref{nonlin_ass} through the S-procedure, we obtain
\begin{align}\label{nonlmi_6}
z_{i}^{\top}\mathcal{P}_{i}^{-1}\mathcal{A}_{{\rm cl},i}\mathcal{Z}_i+
\mathcal{Z}_{i}^{\top}\mathcal{A}_{{\rm cl},i} ^\top\mathcal{P}_{i}^{-1} z_i \le -\frac{1}{\alpha_i} z_i^{\top}z_i.
\end{align}
Consider the Lyapunov function
$V_i(z_i):=z_i^\top\mathcal P_i^{-1}z_i$.
Define \(\mathcal O=[\tilde o_{ij}]\) analogously to \(O=[o_{ij}]\) in the proof of Theorem~\ref{lin_the}, with
\(A_{{\rm cl},ij}\) replaced by \(\mathcal A_{{\rm cl},ij}\).
Then, condition \eqref{NONcondi1_3} implies that \(\mathcal O\) is a nonsingular \(M\)-matrix. By the same vector Lyapunov argument as in Theorem~\ref{lin_the}, the origin of the homogeneous interconnected closed-loop dynamics is exponentially stable. Consequently, in the presence of constant disturbances, the closed-loop trajectories remain bounded.

The remaining convergence argument follows the same KKT-based output-regulation reasoning as in Theorem~\ref{lin_the}. In particular, the local regulation errors satisfy \(e_i(t)\to0\), and the communication protocol yields asymptotically consistent estimates, i.e.,
$\hat y_{-i}^i(t)-y_{-i}(t)\to0$.
Together with the multiplier dynamics
$\dot\varsigma_i=R_i y_i-h_i$,
the limiting closed-loop steady state satisfies the KKT conditions of
the equality-constrained game. Since Assumption~\ref{ass_pseudo}
guarantees that the KKT system has the unique solution
\((y^\ast,\varsigma^\ast)\), it follows that
$y(t)\to y^\ast$.
Therefore, the decentralized control strategy \eqref{nonstate_control}
solves Problem~\ref{pro_2}.
\end{pf}

\section{Simulation Results}\label{Simulation}
To illustrate the proposed framework, we consider two case studies: (i) a network of unmanned aerial vehicles (UAVs) with linear dynamics subject to exogenous disturbances, and (ii) a formation of rotary-wing aerial vehicles with nonlinear dynamics subject to constant disturbances. In both cases, decentralized gains are synthesized entirely from noisy data via the SDPs derived in the theory and implemented under partial-decision information over connected communication graphs.

\begin{figure}[t]
  \centering
    \includegraphics[scale=0.4]{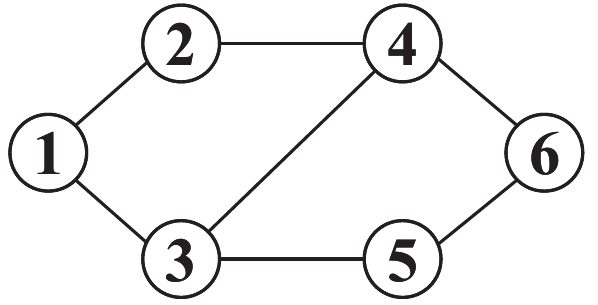}
    \caption{Communication graph of the UAV network.}\label{fig1}
  \centering
\end{figure}

\subsection{UAV Networks}
Consider a network of six UAVs communicating over a connected graph shown in Fig. \ref{fig1}. For each agent $i \in \mathcal{I}=\{1,\dots,6\}$, the dynamics are given by
\begin{align}
\dot{y}_i = v_i, \qquad
\dot{v}_i = -\frac{k_i}{m_i}v_i + u_i + d_i,
\end{align}
where $y_i,v_i,u_i\in\mathbb{R}^3$ denote the position, velocity, and control input, respectively. Defining $x_i=\operatorname{col}(y_i,v_i)$, the above dynamics can be written in the form of \eqref{linsys}. The agent parameters are
$m=[1.2, 1.5, 1.8, 1.35, 1.65, 1.4]\ \text{kg}$, and
$k=[0.24, 0.30, 0.40,$ $0.29, 0.27, 0.32]\ \text{N}\!\cdot\!\text{s}/\text{m}$.
The wind disturbance $d_i\in\mathbb{R}^3$ is generated by the exosystem
\begin{equation}
\dot{d}_i =\setlength{\arraycolsep}{2pt} \begin{bmatrix}
0 & \tfrac{\pi}{10} & -\tfrac{\pi}{20} \\
-\tfrac{\pi}{10} & 0 & \tfrac{\pi}{15}\\
\tfrac{\pi}{20}  & -\tfrac{\pi}{15}& 0
\end{bmatrix} d_i.    
\end{equation}
In this case study, each UAV is assigned an individual waypoint $r_i$, while the team is required to maintain cohesion at a prescribed altitude. This creates a trade-off between individual target tracking and formation preservation, and  motivates the following constrained game
\begin{subequations}
\begin{align}
\min_{y_i} \quad & a_{i}^{i} \| y_i - r_i \|^2 + \sum_{j \in \mathcal{I}, \, j \neq i} b_{i}^{j} \| y_i - y_j \|^2 ,\label{sin_cost_a}\\
\text{s.t.} ~~\quad & 
 [ 0 ~ 0 ~ 1] y_i = h_i.\label{sin_cost_b}
\end{align}
\end{subequations}
Here, the first term in \eqref{sin_cost_a} promotes tracking of the individual waypoint, while the second term promotes team cohesion. In \eqref{sin_cost_b}, $h_i=30$ specifies a common operating altitude of 30 meters. The weights are chosen as $a_i^i=3$ and $b_i^j=0.25$, under which the block strict diagonal dominance condition in Remark \ref{KKT-matrix} is satisfied, and hence Assumption \ref{ass_pseudo} holds.
The target positions are
$r_{1}=[-2,1,29]^{\top}$, $r_{2}=[1,-2,28]^{\top}$, $r_{3}=[2,-1,31]^{\top}$, $r_{4}=[-1,2,30]^{\top}$, $r_{5}=[-2,-2,32]^{\top}$, and $r_{6}=[0,1,30]^{\top}$,
with initial positions $(1,2,0)$, $(2,1,0)$, $(0,-1,0)$, $(2,0,0)$, $(1,-1,0)$, and $(1,0,0)$, respectively. The specified initial positions only determine where the UAVs start, whereas the target positions define the desired mission locations in the local cost functions. Together with the altitude constraints, they determine the constrained NE
$y^{*}=[-1.57~0.69~30~0.7~-1.56~30~1.34~-0.77~30~-0.68~1.5~30~-1.63~-1.71~30~-0.16~0.84~30]^{\top}$.

For the data-driven synthesis, offline data over $T=50$ sampling instants were collected. The chosen data length is consistent with the guideline in Remark \ref{rankre}, and the resulting data matrix satisfies the full-row-rank condition in Assumption \ref{rank}. During the data-collection phase, the control inputs $u_i$ were selected as excitation signals, and the corresponding local state trajectories in \eqref{eq:argue_no}  and outputs were recorded. The inputs $u_i$ were generated from the uniform distribution $[-0.5, 0.5]$. The initial conditions of the disturbances \(d_i(0)\) and the measurement noises \(\pi_i\) were sampled from \([-0.001,0.001]\). The disturbance trajectories \(d_i(t)\) were then generated by the exosystem above. The resulting noisy data were used to compute the decentralized control gains via the SDP \eqref{condi1}.

Under the synthesized strategy \eqref{eq:ct}, Fig. \ref{fig2} shows that the UAVs move from their initial positions (circles) to the constrained NE (crosses), whose altitude component satisfies the prescribed constraint $h_i=30$ m.
Fig. \ref{fig3} further shows that the regulated output errors $e_i$ converge asymptotically to zero, in agreement with the theoretical analysis.

\begin{figure}[htbp]
  \centering
  \includegraphics[scale=0.52] {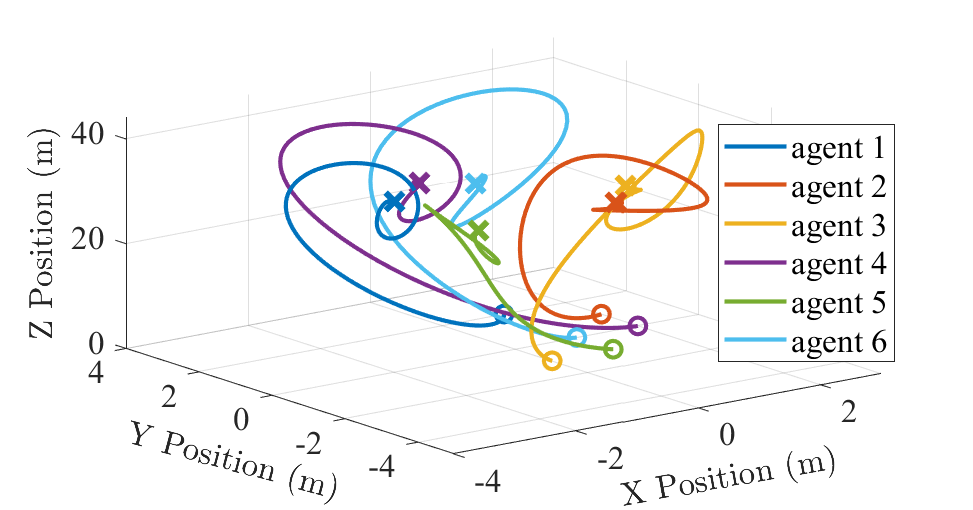}
  \caption{Trajectories of the UAV network.}\label{fig2}
\end{figure}

\begin{figure}[htbp]
  \centering
  \includegraphics[scale=0.48] {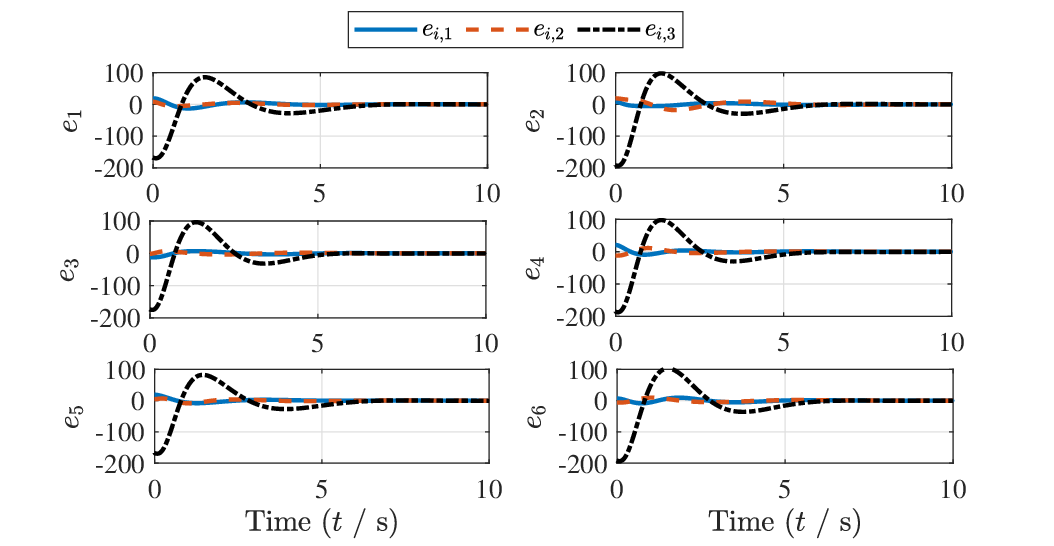}
  \caption{Regulation errors of the UAV network.}\label{fig3}
\end{figure}

\noindent\textit{1) Comparison with the model-based approach.}
For comparison, we implement a model-based controller in which the exact system matrices, available only in simulation, are used to compute the decentralized feedback gains. The NE model, communication protocol, and internal model are kept the same. As shown in Fig.~\ref{figq1_0}, both controllers drive the local regulation errors to zero, with nearly identical transient responses. This shows that the proposed data-driven design achieves performance comparable to the model-based one, although it is synthesized directly from noisy offline data without using the exact system matrices.

\begin{figure}[htbp]
  \centering
  \includegraphics[scale=0.48] {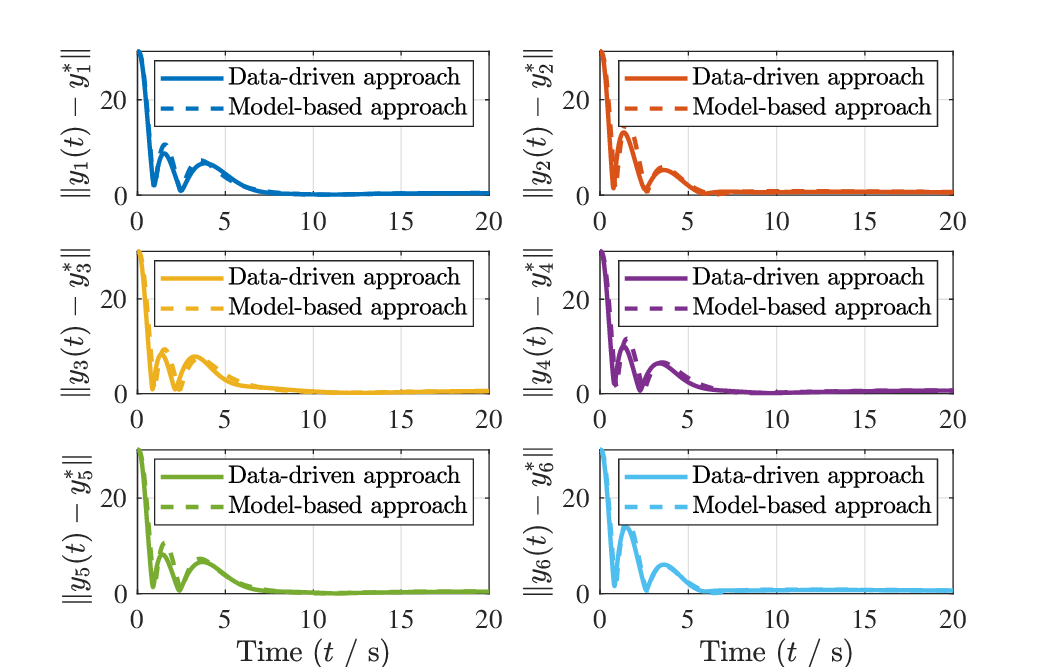}
  \caption{NE seeking error of each agent under data-driven control and model-based control.}\label{figq1_0}
\end{figure}

\noindent\textit{2) Comparison with two reduced designs.}
To show the role of the proposed modules, we compare the full controller with two reduced designs, namely, one without the communication protocol \eqref{eq:cpy} and one without the KKT dynamics \eqref{eq:NEMa}.
Fig.~\ref{figq1_1} shows that the proposed communication protocol drives the average estimation error to zero, whereas removing \eqref{eq:cpy} results in a persistent estimation mismatch. Fig.~\ref{figq1_2} shows that the KKT dynamics \eqref{eq:NEMa} are essential for asymptotic feasibility, since without them the equality-constraint residual $Ry-h$ does not converge to zero. Fig.~\ref{figq1_3} further reports the NE seeking error $\|y-y^*\|$. The proposed controller drives this error to zero, while both reduced designs fail to converge to the NE. 

\begin{figure}[htbp]
  \centering
  \includegraphics[scale=0.5] {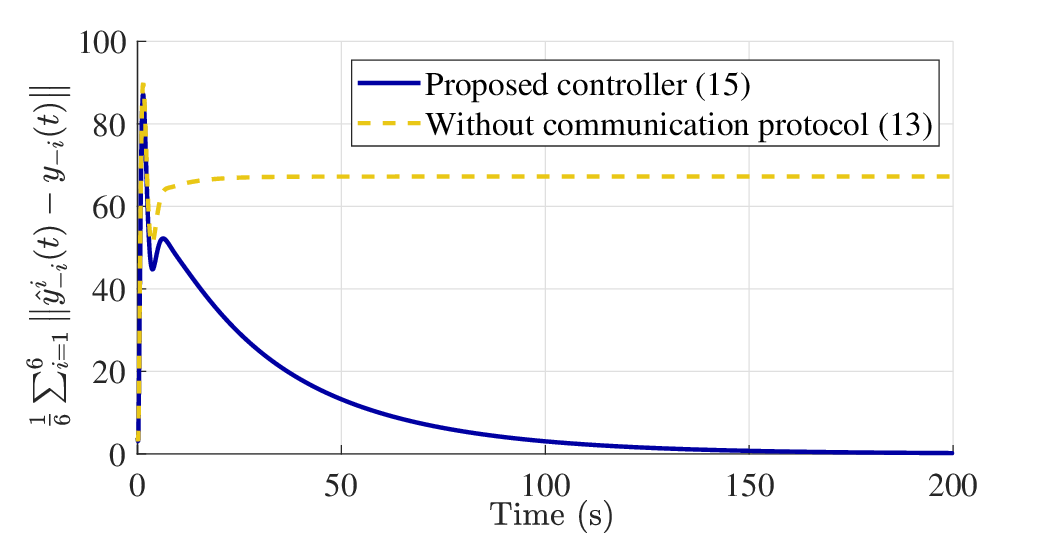}
  \caption{Average estimation error with and without the communication protocol (13).}\label{figq1_1}
\end{figure}

\begin{figure}[htbp]
  \centering
  \includegraphics[scale=0.5] {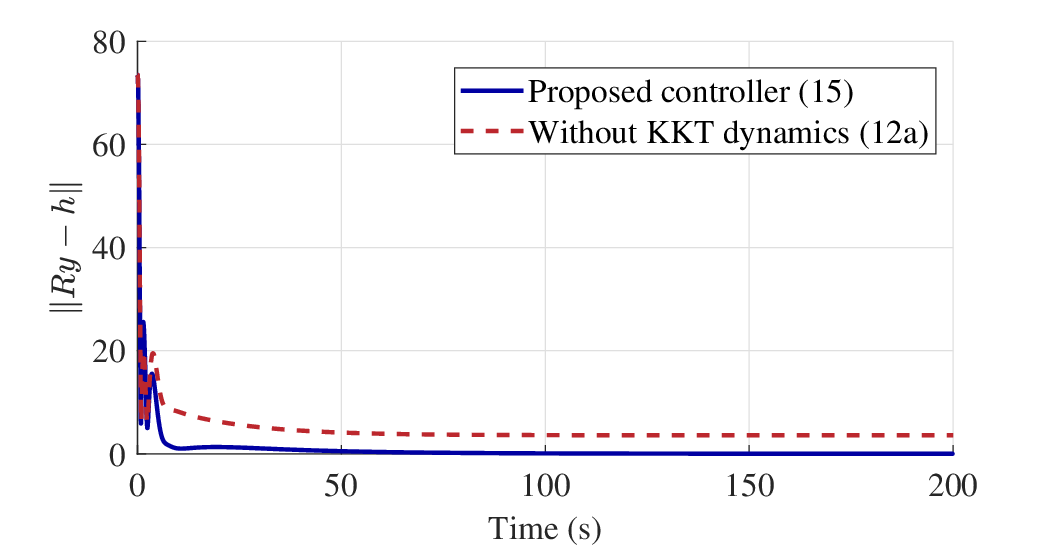}
  \caption{Equality-constraint residual with and without the  KKT dynamics (12a).}\label{figq1_2}
\end{figure}

\begin{figure}[htbp]
  \centering
  \includegraphics[scale=0.49] {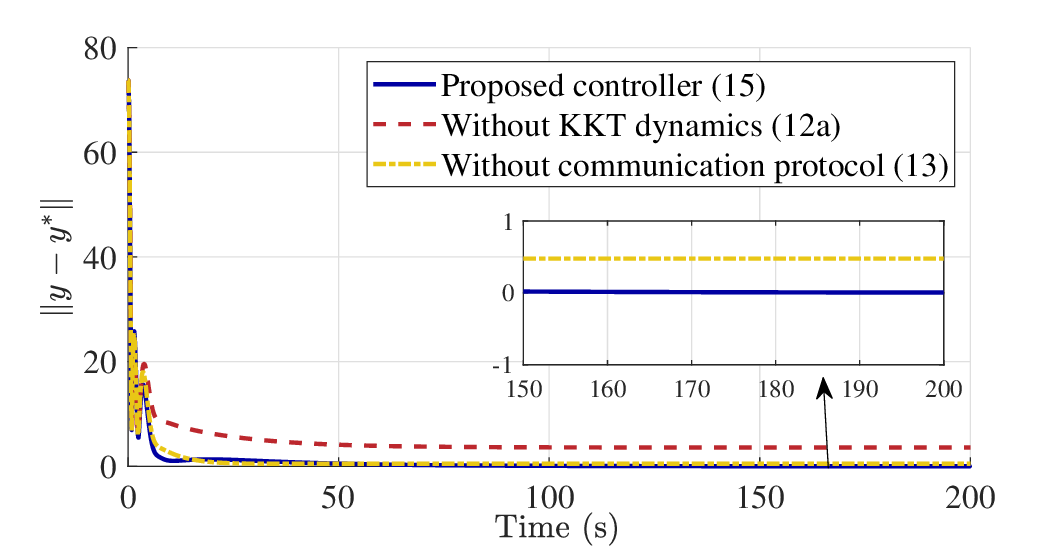}
  \caption{NE seeking error under the proposed and reduced designs.}\label{figq1_3}
\end{figure}

\begin{figure}[t]
  \centering
  \includegraphics[scale=0.7] {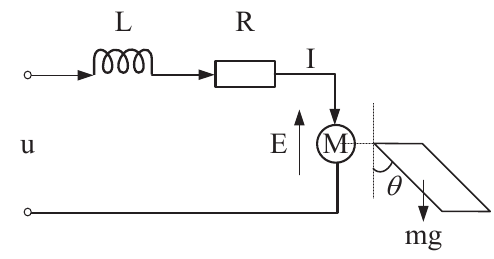}
  \caption{Schematic diagram of the wing load.}\label{fig4}
  \centering
\end{figure}

\subsection{Rotary-Wing Aerial Vehicle Formation}
To evaluate the proposed nonlinear design on heterogeneous agents, we consider a network of five rotary-wing aerial vehicles communicating over the connected graph in Fig. \ref{fig5}. Following the rotary-wing model in \citet{ZhaoAppointed2022}, but allowing agent-dependent normalized coefficients, the dynamics of agent $i\in\mathcal I=\{1,\dots,5\}$ are given by
\begin{equation}
\dot x_i = A_i x_i + A_{1i}\sin(\theta_i) + B_i u_i + D_i w_i,~
y_i=C_i x_i,
\end{equation}
where $x_i=[\theta_i,\dot\theta_i,I_i]^\top$, $y_i=[\theta_i,I_i]^\top$, and $\theta_i$, $I_i$, and $u_i$ denote the angular displacement, motor current, and control input, respectively. Equivalently, the model can be written in the form of \eqref{nonlinsys} with
\begin{equation*}
\mathcal Z(x_i)=\operatorname{col}(x_i,\sin(\theta_i)),\qquad
\mathcal A_i=[\,A_i\;\;A_{1i}\,].
\end{equation*}
The system matrices are
\[
A_i=
\begin{bmatrix}
0 & 1 & 0\\
-k_{1i} & 0 & k_{2i}\\
0 & -k_{3i} & -k_{4i}
\end{bmatrix},\qquad
A_{1i}=
\begin{bmatrix}
0\\
-k_{5i}\\
0
\end{bmatrix},
\]
\[
B_i=
\begin{bmatrix}
0 & 0\\
b_{1i} & 0\\
0 & b_{2i}
\end{bmatrix},\qquad
D_i=d_i I_3,\qquad
C_i=
\begin{bmatrix}
1 & 0 & 0\\
0 & 0 & 1
\end{bmatrix}.
\]
The heterogeneous coefficients are chosen as
\begin{align*}
k_1&=[1.00,\,0.92,\,1.10,\,0.95,\,1.05],\\
k_2&=[1.00,\,1.05,\,0.95,\,1.08,\,0.92],\\
k_3&=[0.50,\,0.46,\,0.54,\,0.48,\,0.52],\\
k_4&=[0.20,\,0.24,\,0.18,\,0.22,\,0.26],\\
k_5&=[0.40,\,0.36,\,0.44,\,0.38,\,0.42],\\
b_1&=[1.00,\,0.90,\,1.10,\,0.95,\,1.05],\\
b_2&=[1.00,\,1.08,\,0.92,\,1.04,\,0.96],\\
d&=[1.00,\,1.05,\,0.95,\,1.08,\,0.92].
\end{align*}
The initial output values are chosen as $[0.1,0]^{\top}$, $[-0.1,0]^{\top}$, $[0,0]^{\top}$, $[0.2,0]^{\top}$, and $[-0.2,0]^{\top}$. The constant exogenous signal is chosen as $w_i=[0.2,\,0.1,\,0.1]^\top$ for all $i\in\mathcal I$.

In this case study, each rotary-wing aerial vehicle is assigned an individual target output $r_i$ while coordinating with the others under a prescribed motor-current requirement. This leads to the following constrained game
\begin{subequations}
\begin{align}
\min_{y_i} \quad & a_{i}^{i} \| y_i - r_i \|^2 + \sum_{j \in \mathcal{I}, \, j \neq i} b_{i}^{j} \| y_i - y_j \|^2, \label{nsin_cost_a}\\
\text{s.t.} ~~\quad & [ 0 ~ 1] y_i = h_i. \label{nsin_cost_b}
\end{align}
\end{subequations}
Here, the cost in \eqref{nsin_cost_a} combines the individual regulation objective and the coordination requirement. Constraint \eqref{nsin_cost_b} imposes the prescribed motor-current level, with $h_i = 2$ A. The weights are chosen as $a_i^i=4$ and $b_i^j=0.3$, under which the block strict diagonal dominance condition in Remark \ref{KKT-matrix} is satisfied, and hence Assumption \ref{ass_pseudo} holds. 
The target outputs are chosen as $[0.5,1.8]^{\top}$, $[0.4,1.9]^{\top}$, $[0.6,1.8]^{\top}$, $[0.3,1.7]^{\top}$, and $[0.5,1.9]^{\top}$. The corresponding constrained NE is $y^\ast= [0.4992,2,0.4004,2,0.5873,2,0.3070,2,0.5061,2]^\top$.
For the data-driven synthesis, offline input-state data over $T=30$ sampling instants were collected. The inputs $u_i$ were generated from the uniform distribution $[-0.5,0.5]$, and the measurement noises during data collection were generated from $[-0.001,0.001]$. The resulting noisy dataset was then used to solve the LMIs in \eqref{NONSDP} and construct the decentralized controller in \eqref{nonstate_control}.

As illustrated in Fig. \ref{fig6}, the proposed strategy \eqref{nonstate_control} drives the outputs to the NE. Moreover, Fig. \ref{fig7} shows that the corresponding regulation errors converge asymptotically to zero, which is consistent with the theoretical stability result and illustrates the effectiveness of the proposed data-driven approach.

\begin{figure}[htbp]
  \centering
    \includegraphics[scale=0.4]{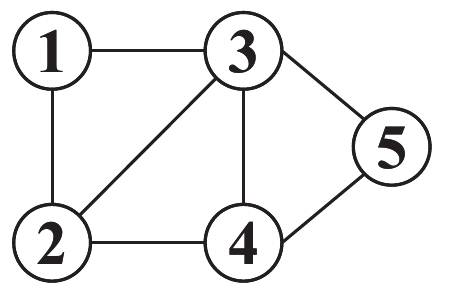}
    \caption{Communication graph of rotary-wing aerial vehicles.}\label{fig5}
  \centering
\end{figure}

\begin{figure}[htpb]
  \centering
  \includegraphics[scale=0.48] {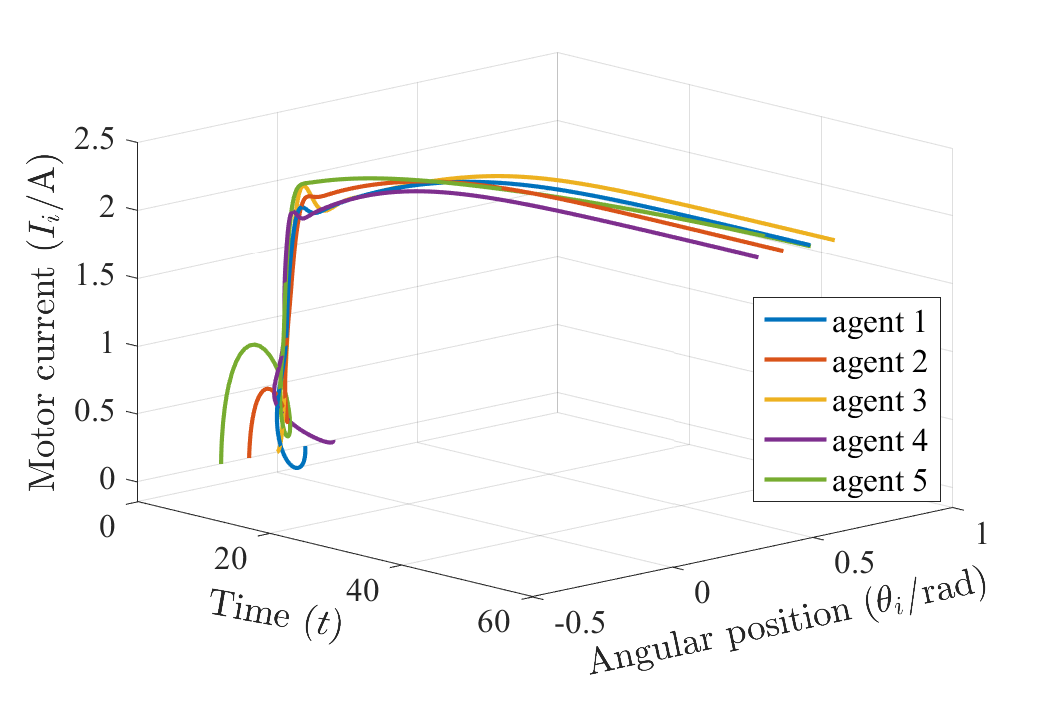}
  \caption{Trajectories of rotary-wing aerial vehicles.}\label{fig6}
\end{figure}

\begin{figure}[htpb]
  \centering
  \includegraphics[scale=0.5] {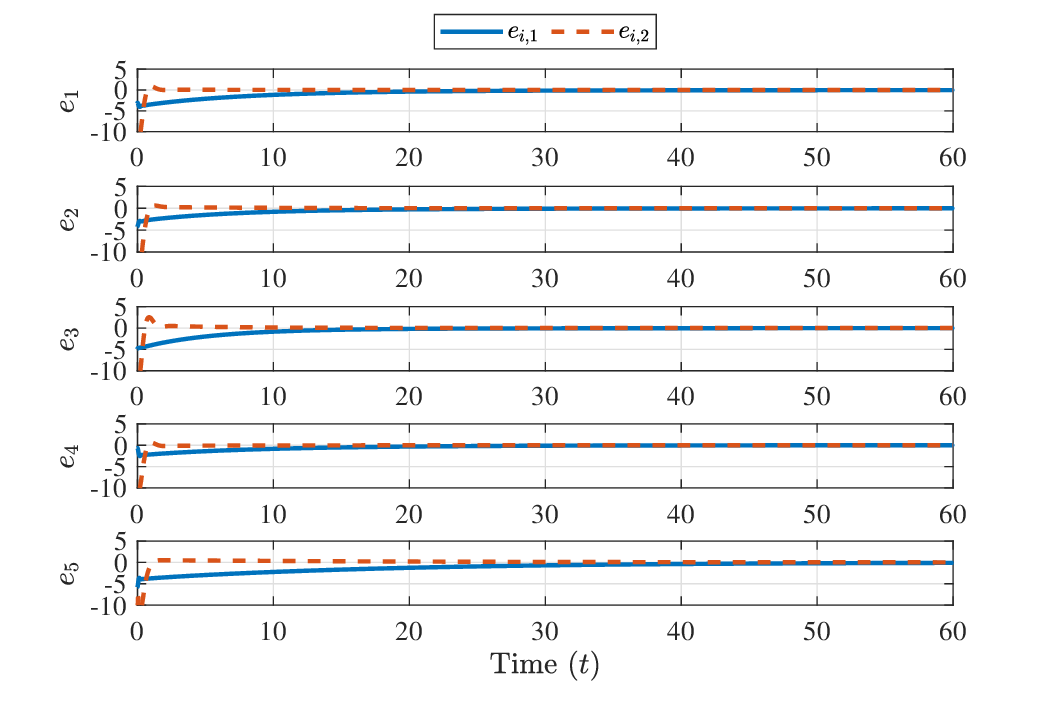}
  \caption{Regulation errors of rotary-wing aerial vehicles.}\label{fig7}
\end{figure}

\section{Conclusions}
This paper developed a unified data-driven framework for decentralized NE seeking in multi-agent systems with equality constraints and partial-decision information, under unknown dynamics and exogenous disturbances. By recasting the problem as a cooperative output regulation problem, the proposed approach integrates an NE model, a communication protocol, an internal model, and a direct data-driven controller, without explicit system identification.
For linear systems subject to exogenous disturbances generated by known exosystems, data-based SDPs were derived to guarantee closed-loop stability and asymptotic convergence to the NE. The framework was further extended to a structured class of nonlinear systems with constant disturbances by combining integral control and quadratic constraints. Numerical simulations illustrated the effectiveness of the proposed approach.
Future work will consider inequality constraints, time-varying dynamics, and unknown exosystem matrices to broaden the applicability of the proposed framework.

\bibliographystyle{plainnat}
\bibliography{ref}
\end{document}